\theoremstyle{definition}
\theoremstyle{remark}
\theoremstyle{plain}
\theoremstyle{plain}
\theoremstyle{plain}
\theoremstyle{plain}
\newtheorem{proposition}{Proposition}
\begin{document}

% Use the \preprint command to place your local institutional report number 
% on the title page in preprint mode.
% Multiple \preprint commands are allowed.
%\preprint{}

\title{Exact expressions for nonperturbative guiding center theory in symmetric fields}
%\title{Perturbative guiding center theory fails for $\alpha$-particles in stellarators} %Title of paper

% repeat the \author .. \affiliation  etc. as needed
% \email, \thanks, \homepage, \altaffiliation all apply to the current author.
% Explanatory text should go in the []'s, 
% actual e-mail address or url should go in the {}'s for \email and \homepage.
% Please use the appropriate macro for the type of information

% \affiliation command applies to all authors since the last \affiliation command. 
% The \affiliation command should follow the other information.

\author{I. Hollas}
\affiliation{Department of Physics, The University of Texas at Austin, Austin, TX 78712, USA}
\author{R. Agarwal}
\affiliation{Department of Physics, The University of Texas at Austin, Austin, TX 78712, USA}
\author{J. W. Burby}
\affiliation{Department of Physics and Institute for Fusion Studies, The University of Texas at Austin, Austin, TX 78712, USA}
\author{A. J. Brizard}
\affiliation{Department of Physics, Saint Michael's College, Colchester, VT 05439, USA}
% \author{M. Ruth}
% \affiliation{Department of Physics and The Oden Institute for Computational Engineering and Sciences, The University of Texas at Austin, Austin, TX 78712, USA}
% \author{D. A. Messenger}
% \affiliation{Theoretical Division, Los Alamos National Laboratory, Los Alamos, NM, 87545, USA}
% \author{L. Carbajal}
% \affiliation{Type One Energy Group Inc., Knoxville, TN 37931, USA}
 %\affiliation{New York University, New York, New York 10012, USA}
%\author{A. Cerfon}
%\affiliation{Courant Institute of Mathematical Sciences, New York, New York 10012, USA}
%\email[]{Your e-mail address}
%\homepage[]{Your web page}
%\thanks{}
%\altaffiliation{}

% Collaboration name, if desired (requires use of superscriptaddress option in \documentclass). 
% \noaffiliation is required (may also be used with the \author command).
%\collaboration{}
%\noaffiliation

\date{\today}

\begin{abstract}
We apply a recently-developed nonperturbative guiding center formalism to charged particle dynamics in fields with two-parameter continuous symmetry groups. This entails finding exact constants of motion, valid in the nonperturbative regime, that agree with Kruskal's adiabatic invariant series to all orders in the perturbative regime, when the field scale length is large compared with a typical gyroradius. We demonstrate that the nonperturbative guiding center model makes exact predictions in these cases, even though it eliminates the cyclotron timescale, thereby establishing a theoretical baseline for performance of the nonperturbative formalism.
\end{abstract}

\pacs{}% insert suggested PACS numbers in braces on next line

\maketitle %\maketitle must follow title, authors, abstract and \pacs

\section{Introduction}
Perturbative guiding center theory, as developed by Kruskal \cite{kruskal58,Kruskal_1962}, Littlejohn \cite{Littlejohn_1981,Littlejohn_1983,Littlejohn_1984}, and various others \cite{Cary_2009}, provides a theoretical backbone for much of magnetized plasma modeling. However, certain classes of high-energy plasma particles \cite{Carbajal_2024,Rodrigues_2024,Assuncao_2023,Carbajal_2020,Ogawa_2016,Pefferle_2015,Cecconello_2018} break the usual guiding center perturbation expansions. For such particles, a recently proposed non-perturbative variant of the guiding center model \cite{j_w_burby_nonperturbative_2025} shows promise as a replacement for traditional guiding center theory that achieves accuracy comparable to the full-orbit model without resolving the short cyclotron timescale.

The non-perturbative guiding center model differs qualitatively from the traditional model because it requires first finding the non-perturbative adiabatic invariant $\mathcal{J}$. Here, ``non-perturbative adiabatic invariant" refers to an approximate constant of motion that agrees with Kruskal's adiabatic invariant series \cite{Kruskal_1962,burby_general_2020} to all orders in the perturbative regime $\epsilon\ll 1$, while remaining nearly-conserved in the nonperturbative regime $\epsilon\sim 1$. Here $\epsilon = \rho/L$ denotes the ratio of a characteristic gyroradius to a characteristic field scale length. Once the invariant is known the non-perturbative guiding center equations are explicit and readily computable. For general magnetic geometries Ref.\,\onlinecite{j_w_burby_nonperturbative_2025} advocates a data-driven approach to finding the non-perturbative adiabatic invariant. However, this approach assumes existence of $\mathcal{J}$ based on empirical numerical evidence. Rigorous theoretical analysis establishing existence of $\mathcal{J}$ from first principles would place the non-perturbative guiding center model on firmer theoretical footing.

Establishing existence of an exact non-perturbative adiabatic invariant in general magnetic geometries may lie beyond the reach of any analytical method. Indeed, a non-perturbative exact invariant cannot exist in regions of phase space where dynamics is sufficiently chaotic. On the other hand, it is obvious that a non-perturbative $\mathcal{J}$ exists for particles moving in a straight magnetic field $\bm{B} = B_0\,e_z$. This suggests constructing an exact non-perturbative $\mathcal{J}$ may be possible in sufficiently symmetric field configurations. In fact, Qin and Davidson found an exact invariant \cite{qin_exact_2006} asymptotic to the leading-order magnetic moment in fields of the form $\bm{B}=B(t)\,e_z$. This invariant does not clearly comprise an example of the true non-perturbative $\mathcal{J}$ because Qin and Davidson only demonstrated agreement with Kruskal's series at leading order in the perturbative regime. Nevertheless, their result motivates a search for exact invariants that furnish a non-perturbative $\mathcal{J}$ in special field configurations.

In this Article we provide a complete justification for the non-perturbative guiding center model in a pair of symmetric field configurations: a slab $\bm{B} = (1+y)e_z$ and a screw pinch $\bm{B} = \nabla\psi\times\nabla\theta - \iota(\psi)\nabla\psi\times\nabla z$. In each case, we first identify a formula for an exact invariant $\mathcal{J}$ and prove it agrees with Kruskal's adiabatic invariant series to all orders in $\epsilon$ when $\epsilon\ll 1$. Crucially, the $\mathcal{J}$ we construct is an exact constant of motion, even for non-perturbative $\epsilon$. We establish all-orders agreement with Kruskal's series by exploiting the well-known Liouville-Arnold theorem \cite{Arnold_1989,zung_conceptual_2018} and Kruskal's foundational observations on uniqueness properties of the adiabatic invariant series. Then we construct the non-perturbative guiding center model associated with $\mathcal{J}$ and show numerically that it agrees with the full-orbit model.

In each of the examples considered below we consider the motion of a charged particle with mass $m$ and charge $q$. The constants $B_0$, $k_0$, and $\rho_0$ denote a characteristic magnetic field strength, magnetic field wavenumber, and gyroradius, respectively. The characteristic cyclotron (angular) frequency is $\omega_0 = |q|\,B_0/m$, the guiding center ordering parameter is $\epsilon = k_0\,\rho_0$, and the sign of the charge is $\sigma$. We measure distance, velocity, time, and magnetic field strength in units of $k_0^{-1}$, $\omega_0\,\rho_0$, $\omega_0^{-1}$, and $B_0$.

% Body of paper goes here. Use proper sectioning commands. 
% References should be done using the \cite, \ref, and \label commands
%%%%

\section{Uniform field}
% {\color{red}Put uniform field content here.}
We consider the simple case of a uniform magnetic field first because it motivates our identification of action variables in the slab and screw pinch. The magnetic field is given by $\bm{B} = \bm{e}_z$. For simplicity, we assume that the $x$--direction is periodic with periodicity $2\pi$.  The equations of motion  are given by 
\begin{align}
\dot{v}_x & = \phantom{-}\sigma\,v_y\label{uniform_vxdot}\\
\dot{v}_y & = -\sigma\,v_x\label{uniform_vydot}\\
\dot{x} & = \epsilon\,v_x\label{uniform_xdot}\\
\dot{y} & = \epsilon\,v_y.\label{uniform_ydot}
\end{align}
Note that we ignore dynamics in the $z$-direction, which trivially decouple from the $(x,y)$-dynamics. 

The equations of motion \eqref{uniform_vxdot}-\eqref{uniform_ydot} comprise a Hamiltonian system on the symplectic manifold $M = \mathbb{T}\times \mathbb{R}\times\mathbb{R}^2\ni (x,y,v_x,v_y)$, where $\mathbb{T}$ denotes the $2\pi$-periodic circle $\mathbb{T} = \mathbb{R}/2\pi\mathbb{Z}$. The symplectic $2$-form is $\omega = - d\vartheta$, where the Liouville $1$-form is
\begin{align*}
\vartheta = \epsilon\,(v_x\,dx + v_y\,dy) -\sigma\,y\,dx,
\end{align*}
and the Hamiltonian is $H = \epsilon^2\frac{1}{2}(v_x^2 + v_y^2)$. The Noether conserved quantity associated with $x$-translation invariance is therefore
\begin{align*}
p_x = \iota_{\partial_x}\vartheta = \epsilon\,v_x - \sigma\,y.
\end{align*}
The pair of constants of motion $H$ and $p_x$ commute under Poisson bracket and are functionally-independent. It follows that the dynamics is integrable in the sense of Liouville. Liouville integrability implies that each connected component of any compact regular level set of $(H,p_x)$ is an invariant $2$-torus, leading to a view of phase space foliated by invariant $2$-tori.

In order to proceed further it is useful to introduce alternate constants of motion, $(r,Y)$, and change coordinates on phase space from $(x,y,v_x,v_y)$ to $(x,y,r,\zeta)$. The conserved quantities are $r = \epsilon^{-1}\sqrt{2H}$ and $Y = -p_x/\sigma$. The old phase space coordinates relate to the new ones according to $v_x = r\cos\zeta$, $v_y = -r\sin\zeta$ %{\color{blue}(Not True! The correct expressions are $v_{x} = -\,r\,\sin\zeta$ and $v_{y} = -\,\sigma\,r\,\cos\zeta$, if $\zeta$ denotes the correct gyroangle, where $\dot{\zeta} = (1 + y)$ is the gyrofrequency.)}.

Assuming $\overline{r} > 0$, the level set $\Gamma(\overline{r},\overline{Y})$ defined by $(r,Y) = (\overline{r},\overline{Y})$ is diffeomorphic to the $2$-torus $\mathbb{T}^2$. We introduce a parameterization of this $2$-torus, $\mathbb{T}^2\rightarrow \Gamma(\overline{r},\overline{Y}):(x,\zeta)\mapsto \Gamma(x,\zeta | \overline{r},\overline{Y})$, according to
\begin{align*}
\Gamma(x,\zeta | \overline{r},\overline{Y}) = \begin{pmatrix}x \\ y^*(x,\zeta | \overline{r},\overline{ Y}) \\  r^*(x,\zeta | \overline{r},\overline{ Y}) \\ \zeta\end{pmatrix},
\end{align*}
where the component functions are 
\begin{align*}
r^*(x,\zeta|\overline{r},\overline{Y}) & = \overline{r}\\
y^*(x,\zeta | \overline{r},\overline{Y}) & = \overline{Y} + \sigma\,\epsilon\,\overline{r}\, \cos\zeta.
\end{align*}
A homology basis for $\Gamma(r,Y)$ is given by the pair of closed curves $\gamma_i$ defined according to
\begin{align*}
\gamma_1(\zeta| r,Y) &= \Gamma(0,\sigma\,\zeta\mid r,Y)\\
\gamma_2(x | r,Y) & =\Gamma(x,0\mid r,Y).
\end{align*}
(Note that the orientation of $\gamma_1$ agrees with that of $\epsilon=0$ cyclotron rotation.) Integrating the Liouville $1$-form along these curves then defines the action variables $J_i(r,Y) = (2\pi)^{-1}\oint_{\gamma_i}\vartheta$, given explicitly by
\begin{align}
J_1(r,Y) & = \frac{\epsilon^2\,r^2}{2}\label{slab_J}\\
J_2(r,Y) & = -\sigma\,Y.
\end{align}
As is well-known, the action variable $\mathcal{J} = J_1$ is the exact magnetic moment invariant. 

\section{Slab geometry}
\subsection{Nonperturbative adiabatic invariant}
The magnetic field is given by $\bm{B} = (1+y)\bm{e}_z$. The region of interest is $y>-1$ %{\color{blue}(Not True! The region of interest is $y + 1 > 0$, i.e., the right well in a quartic potential $V(y) = \frac{1}{2} \left[\frac{1}{2}\,(1 + y)^{2} - u\right]^{2}$, where $u = Y + \frac{1}{2}$.)}
, where the magnetic field is positive %{\color{blue}(The range of $B = 1 + y$ is $\sqrt{2u\,(1 - {\sf e})} \leq B \leq \sqrt{2u\,(1 - {\sf e})}$.)}. 
For simplicity, we assume that the $x$- and $z$-directions are each periodic with periodicity $2\pi$ %{\color{blue}(Not True again! The motion in $x$ involves a guiding-center grad-B drift motion in the $-x$-direction.)}.  
The equations of motion  are given by 
\begin{align}
\dot{v}_x & = \phantom{-}\sigma\,v_y\,(1+y)\label{slab_vxdot}\\
\dot{v}_y & = -\sigma\,v_x\,(1+y)\\
\dot{x} & = \epsilon\,v_x\\
\dot{y} & = \epsilon\,v_y.\label{slab_ydot}
\end{align}
% \begin{align*}
% \dot{v}_x &= \phantom{-}\sigma\,v_y\,\omega_0\,(1 + y/L_0)\\
% \dot{v}_y &= - \sigma\,v_x\,\omega_0\,(1+y/L_0)\\
% \dot{x} & = v_x\\
% \dot{y} & = v_y,
% \end{align*}
Note that we ignore dynamics in the $z$-direction, which trivially decouple from the $(x,y)$-dynamics. Brizard gives a complete derivation of the action-angle coordinates for this system in Ref.\,\onlinecite{brizard_actionangle_2022}.

The equations of motion \eqref{slab_vxdot}-\eqref{slab_ydot} comprise a Hamiltonian system on the symplectic manifold \cite{Abraham_2008} $M = \mathbb{T}\times \mathbb{R}\times\mathbb{R}^2\ni (x,y,v_x,v_y)$, where $\mathbb{T}$ denotes the $2\pi$-periodic circle $\mathbb{T} = \mathbb{R}/2\pi\mathbb{Z}$. The symplectic $2$-form is $\omega = - d\vartheta$, where The Liouville $1$-form is
\begin{align*}
\vartheta = \epsilon\,(v_x\,dx + v_y\,dy) -\sigma(y + y^2/2)dx,
\end{align*}
and the Hamiltonian is $H = \epsilon^2\frac{1}{2}(v_x^2 + v_y^2)$.
% \begin{align*}
% \vartheta = m\,v_x\,dx + m\,v_y\,dy - q\,B_0\,L_0\,\left(\frac{y}{L_0} + \frac{1}{2}\frac{y^2}{L_0^2}\right)\,dx.
% \end{align*}
The Noether conserved quantity associated with $x$-translation invariance is therefore
\begin{align*}
p_x = \iota_{\partial_x}\vartheta = \epsilon\,v_x - \sigma\,(y + y^2/2).
\end{align*}
% \begin{align*}
% p_x = \iota_{\partial_x}\vartheta = m\,v_x - q\,B_0\,L_0\,\left(\frac{y}{L_0} + \frac{1}{2}\frac{y^2}{L_0^2}\right).
% \end{align*} 
The pair of constants of motion $H$ and $p_x$ commute under Poisson bracket and are functionally-independent. It follows that the dynamics is integrable in the sense of Liouville. Liouville integrability implies that each connected component of any compact regular level set of $(H,p_x)$ is an invariant $2$-torus, leading to a view of phase space foliated by invariant $2$-tori.

In order to proceed further it is useful to introduce alternate constants of motion, $(r,Y)$, and change coordinates on phase space from $(x,y,v_x,v_y)$ to $(x,y,r,\zeta)$. The conserved quantities are $r = \epsilon^{-1}\sqrt{2H}$ and $Y = -p_x/\sigma$. The old phase space coordinates relate to the new ones according to $v_x = r\cos\zeta$, $v_y = -r\sin\zeta$.
%{\color{blue}(Same comment as above for $v_{x} = -\,r\,\sin\zeta$ and $v_{y} = -\,\sigma\,r\,\cos\zeta$.)}. {\color{green} note: the exact same wording is used in section II - Ian}

% In order to proceed further it is useful to introduce alternate constants of motion, $(r,\Psi)$, and dimensionless coordinates on phase space, $(\theta,\psi,r,\zeta)$. Let $\rho_0$ denote a characteristic gyroradius. The conserved quantity $r = \sqrt{(2H/m)}/(\rho_0\omega_0)$ denotes a dimensionless gyroradius. The constant of motion $\Psi = -p_x/(q\,B_0\,L_0)$ denotes a dimensionless guiding center flux function. The coordinates $(\theta,\psi,r,\zeta)$ related to $(x,y,v_x,v_y)$ according to $v_x = \omega_0\,\rho_0\,r\,\cos\zeta$, $v_y = -\omega_0\,\rho_0\,r\,\sin\zeta$, $\theta = 2\pi x/L_0$, and $\psi = y/L_0$.

Assuming $\overline{r} > 0$ and $\overline{Y} > 0$, the level set $\Gamma(\overline{r},\overline{Y})$ defined by $(r,Y) = (\overline{r},\overline{Y})$ is diffeomorphic to the $2$-torus $\mathbb{T}^2$. We introduce a parameterization of this $2$-torus, $\mathbb{T}^2\rightarrow \Gamma(\overline{r},\overline{Y}):(x,\zeta)\mapsto \Gamma(x,\zeta | \overline{r},\overline{Y})$, according to
\begin{align*}
\Gamma(x,\zeta | \overline{r},\overline{Y}) = \begin{pmatrix}x \\ y^*(x,\zeta | \overline{r},\overline{ Y}) \\  r^*(x,\zeta | \overline{r},\overline{ Y}) \\ \zeta\end{pmatrix},
\end{align*}
where the component functions are 
\begin{align*}
r^*(x,\zeta|\overline{r},\overline{Y}) & = \overline{r}\\
y^*(x,\zeta | \overline{r},\overline{Y}) & = -1+ (1 + 2\overline{Y})^{1/2}\left(1 + \frac{2\,\sigma\,\epsilon\,\overline{r}}{(1 + 2\overline{Y})}\cos\zeta\right)^{1/2}.
\end{align*}
A homology basis for $\Gamma(r,Y)$ is given by the pair of closed curves $\gamma_i$ defined according to
\begin{align*}
\gamma_1(\zeta| r,Y) &= \Gamma(0,\sigma\zeta\mid r,Y)\\
\gamma_2(x | r,Y) & =\Gamma(x,0\mid r,Y).
\end{align*}
(Note that the orientation of $\gamma_1$ agrees with that of $\epsilon=0$ cyclotron rotation.) Integrating the Liouville $1$-form along these curves then defines the action variables $J_i(r,Y) = (2\pi)^{-1}\oint_{\gamma_i}\vartheta$, given explicitly by
\begin{align}
J_1(r,Y) & = \epsilon\,\sigma\,r\,(1+2Y)^{1/2}\frac{1}{2\pi}\int_0^{2\pi}\left(1 + \frac{2\,\sigma\,\epsilon\,r}{(1 + 2Y)}\cos\zeta\right)^{1/2}\,\cos\zeta\,d\zeta\label{slab_J}\\
J_2(r,Y) & = -\sigma\,Y.
\end{align}
%where 
% \begin{align*}
% \mathcal{J}(r,\Psi) = r\,(1+2\Psi)^{1/2}\frac{1}{2\pi}\int_0^{2\pi}\left(1 + \frac{2\,\sigma\,\epsilon\,r}{(1 + 2\Psi)}\cos\zeta\right)^{1/2}\,\cos\zeta\,d\zeta.
% \end{align*}

% Assuming $\overline{r} > 0$ and $\overline{\Psi} > 0$, the level set $\Gamma(\overline{r},\overline{\Psi})$ defined by $(r,\Psi) = (\overline{r},\overline{\Psi})$ is a graph over $(\theta,\zeta)$ given by $r = r^*(\theta,\zeta | \overline{r},\overline{\Psi})$, $\psi = \psi^*(\theta,\zeta | \overline{r},\overline{\Psi})$, where
% \begin{align*}
% r^*(\theta,\zeta|\overline{r},\overline{\Psi}) & = \overline{r}\\
% \psi^*(\theta,\zeta | \overline{r},\overline{\Psi}) & = -1+ (1 + 2\overline{\Psi})^{1/2}\left(1 + \frac{2\,\sigma\,\epsilon\,\overline{r}}{(1 + 2\overline{\Psi})}\cos\zeta\right)^{1/2},
% \end{align*}
% where $\epsilon = \rho_0/L_0$.

We claim that the first action variable $J_1$ is the non-perturbative adiabatic invariant for this system. We argue precisely as follows.

\begin{proposition}\label{slab_adinv_prop}
The constant of motion $\mathcal{J} = J_1$, given in Eq.\,\eqref{slab_J}, is a non-perturbative adiabatic invariant for Eqs.\,\eqref{slab_vxdot}-\eqref{slab_ydot}. In particular, the series expansion of $J_1$ in powers of $\epsilon$ agrees with Kruskal's adiabatic invariant series to all orders in $\epsilon$.
\end{proposition}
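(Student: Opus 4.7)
The plan is to prove the proposition in three stages.

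First, $\mathcal{J} = J_1$ is an exact constant of motion with essentially no additional work: by construction $J_1$ is a smooth function of the conserved quantities $(r,Y)$ alone, hence is constant along every trajectory. This is the Liouville–Arnold theorem applied to the integrable system with commuting invariants $(H,p_x)$, and so no further argument is needed at the nonperturbative level.

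Second, I would compute the leading-order $\epsilon$-expansion of $J_1$ and verify it matches Kruskal's magnetic moment. Applying the binomial series to the square-root integrand in Eq.~\eqref{slab_J} and integrating term-by-term in $\zeta$, the factor of $\cos\zeta$ kills all contributions from even powers of $\epsilon\,r/(1+2Y)$, leaving an even power series in $\epsilon$ whose leading term is
\begin{equation*}
J_1 = \frac{\epsilon^2\,r^2}{2(1+2Y)^{1/2}} + O(\epsilon^4).
\end{equation*}
Since the guiding-center position on a constant-$(r,Y)$ torus sits at $y_{gc} = -1 + (1+2Y)^{1/2}$, where $B_{gc} = (1+2Y)^{1/2}$, this equals $\epsilon^2 v_\perp^2/(2B_{gc})$, the usual leading-order magnetic moment (rescaled by $\epsilon^2$) that anchors Kruskal's series.

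Third, I would upgrade this leading-order match to all-orders agreement by invoking Kruskal's uniqueness theorem. Because $J_1$ is an exact invariant, its $\epsilon$-Taylor series is automatically a formal first integral of the full vector field at every order. Kruskal's characterization, as sharpened in Ref.~\onlinecite{burby_general_2020}, pins down the adiabatic invariant series uniquely among such formal first integrals once the leading term is fixed and corrections beyond leading order are normalized to have vanishing gyroaverage. The main obstacle is precisely this final identification: exact $x$-translation symmetry means one can add any function of $p_x$ to Kruskal's series at any order without spoiling conservation, so I must verify that the action-integral definition of $J_1$ produces coefficients consistent with the normalization Kruskal imposes. I plan to dispatch this either by exhibiting an explicit near-identity transformation from the Liouville–Arnold angle conjugate to $J_1$ to Kruskal's gyroangle, in which case the two constructions produce the same series from the same leading seed, or by appealing to a coordinate-free characterization of the Kruskal series that recognizes $J_1$ directly as its unique exact representative on the invariant tori.
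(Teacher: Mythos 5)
Your Stage 1 (exactness) and Stage 2 (leading-order match) are fine and agree with the paper. But Stage 3 — the actual all-orders claim — is left as a plan rather than a proof, and you correctly identify the obstacle without resolving it: the naive characterization ``formal first integral with the correct leading term'' does not pin down Kruskal's $\mu$ uniquely in an integrable system, because $p_x$ (and functions thereof) can contaminate the series at every order. Neither of your proposed remedies is carried out, and the first one (building an explicit near-identity transformation between the Liouville–Arnold angle and Kruskal's gyroangle order by order) would be substantial extra work with no obvious termination.

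The paper closes this gap by shifting the uniqueness argument from the \emph{scalar} invariant to the \emph{vector field}. Because $J_1$ is a Liouville–Arnold action variable, the Liouville–Arnold theorem guarantees that its Hamiltonian vector field $X_{J_1}$ is the infinitesimal generator of a circle action, i.e. every integral curve of $X_{J_1}$ is $2\pi$-periodic. Kruskal's uniqueness theorem says the roto-rate $\mathcal{R}$ is the unique formal power series vector field satisfying three conditions: $[\mathcal{R},X_H]=0$; $2\pi$-periodic integral curves; and $\lim_{\epsilon\to 0}\mathcal{R}=\mathcal{R}_0$. Condition (1) is automatic since $\{J_1,H\}=0$; condition (2) is exactly the circle-action property that Liouville–Arnold supplies for free; condition (3) is an explicit short calculation (which the paper does by differentiating the $\epsilon$-expansion of $J_1$ and showing $X_{J_1}\to \sigma(v_y\partial_{v_x}-v_x\partial_{v_y})$). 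It is the periodicity condition that eliminates the $p_x$-ambiguity you worried about: adding a multiple of $p_x$ changes the period of the integral curves. Once $X_{J_1}=\mathcal{R}$ as formal series, $dJ_1=\iota_{X_{J_1}}\omega=\iota_{\mathcal{R}}\omega=d\mu$, so $J_1$ and $\mu$ differ by a constant. This is the key idea your proposal is missing.
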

\begin{proof}
By the Liouville-Arnold theorem, in a neighborhood of $\Gamma({r,Y})$ there are action-angle variables $(\theta_1,\theta_2,J_1,J_2)$, in which the Liouville $1$-form is $\vartheta = J_1\,d\theta_1 + J_2\,d\theta_2$, modulo closed $1$-forms. It follows that the Hamiltonian vector field associated with $J_1$, $X_{J_1}$, is the infinitesimal generator for a circle-action on phase space that leaves Lorentz force dynamics invariant. 

First we will show that as $\epsilon$ tends to zero $X_{J_1}$ limits to so-called limiting roto-rate $\mathcal{R}_0 = \sigma(v_y\,\partial_{v_x} - v_x\,\partial_{v_y})$. The differential of the action $J_1$ is given by $dJ_1 = \partial_rJ_1\,dr + \partial_{Y}J_1\,d Y$. The differentials $dr$ and $d Y$ may be expressed in terms of $dH$ and $dp_x$ as $dr = \epsilon^{-2}(1/r)\,dH$ and $dY = -\sigma\,dp_x$. The Hamiltonian vector field $X_{J_1}$ is therefore given by
\begin{align*}
X_{J_1}=\partial_rJ_1\,\left(\frac{1}{\epsilon^2\,r}\right)\,X_H - \partial_YJ_1\,\sigma\,\partial_x.
\end{align*}
The function $J_1(r,Y)$ can be expanded in powers of $\epsilon$ as
\begin{align}
J_1(r,Y) = \frac{\epsilon^2\,r^2}{2(1+2Y)^{1/2}} + \frac{3\epsilon^4\,r^4}{16(1+2Y)^{5/2}} + O(\epsilon^5),\label{J1_expansion}
\end{align}
which implies the derivatives $\partial_rJ_1$, $\partial_YJ_1$ have the $\epsilon$-expansions
\begin{align*}
\partial_rJ_1(r,Y)& = \frac{\epsilon^2\,r}{(1+y)} + \frac{\sigma\epsilon^3\,r^2\,\cos\zeta}{(1+y)^3} + O(\epsilon^4)\\
\partial_YJ_1(r,Y) & = -\frac{\epsilon^2\,r^2}{2(1+y)^3}  - \frac{3\sigma\,\epsilon^3\,r^3\,\cos\zeta}{2(1+y)^5} + O(\epsilon^4).
\end{align*}
(Here it is crucial that the derivatives $\partial_r,\partial_Y$ are computed before substituting the expression for $Y=Y(y,r,\zeta)$.) The limiting value of $X_{J_1}$ as $\epsilon\rightarrow 0$ is therefore
\begin{align}
\lim_{\epsilon\rightarrow 0}X_{J_1} & = \lim_{\epsilon\rightarrow 0}\bigg(\partial_rJ_1\,\left(\frac{1}{\epsilon^2\,r}\right)\,X_H\bigg) - \lim_{\epsilon\rightarrow 0}\bigg(\partial_YJ_1\,\sigma\,\partial_x\bigg)\nonumber\\
& = \lim_{\epsilon\rightarrow 0}\bigg(\frac{\epsilon^2\,r}{(1+y)}\,\left(\frac{1}{\epsilon^2\,r}\right)\,X_H\bigg) + \lim_{\epsilon\rightarrow 0}\bigg(-\frac{\epsilon^2\,r^2}{2(1+y)^3}\,\sigma\,\partial_x\bigg)\nonumber\\
& = \lim_{\epsilon\rightarrow 0}\bigg(\frac{1}{(1+y)}\,X_H\bigg) \nonumber\\
& = \sigma(v_y\partial_{v_x} - v_x\,\partial_{v_y}) = \mathcal{R}_0,\label{slab_limit_cond}
\end{align}
as claimed.

To complete the proof we now make use of a uniqueness result originally established by Kruskal \cite{Kruskal_1962}. Recall we want to show that $J_1$ agrees with Kruskal's adiabatic invariant series $\mu$ to all orders in $\epsilon$ when $\epsilon \ll 1$. Observe that it is enough to show that $X_{J_1}$ agrees with Kruskal's roto-rate vector field \cite{burby_general_2020} $\mathcal{R}$ to all orders in $\epsilon$. For if this were the case then $dJ_1 = \iota_{X_{J_1}}\omega = \iota_{\mathcal{R}}\omega = d\mu$, which implies $J_1$ and $\mu$ differ by an unimportant constant. (Here we use the fact \cite{burby_general_2020}
 that the roto rate is a Hamiltonian vector field with Hamiltonian $\mu$.) Kruskal showed that the roto-rate $\mathcal{R}$ is uniquely determined as a formal power series in $\epsilon$ by three conditions: (1) $[\mathcal{R},X_{H}] = 0$, (2) every integral curve for $\mathcal{R}$ is periodic with period $2\pi$, and (3) $\lim_{\epsilon\rightarrow 0}\mathcal{R} = \mathcal{R}_0$. We claim that $X_{J_1}$ satisfies each of these conditions. (1) follows from $[X_{J_1},X_{H}] = -X_{\{J_1,H\}} = 0$. (2) follows from the fact that $X_{J_1}$ is the infinitesimal generator for a circle action. (3) was established in Eq.\,\eqref{slab_limit_cond}. It follows that $X_{J_1}$ must agree with Kruskal's roto-rate $\mathcal{R}$, as claimed.

\end{proof}
 We note that although Kruskal's theory only gives formal power series roto-rates and adiabatic invariants in general, in this problem we obtain exact non-perturbative analogues of Kruskal's series due to complete integrability of the dynamics.

\subsection{Direct comparison with Kruskal's series}
{We will now explicitly compare the constant of motion \eqref{slab_J} with Kruskal's adiabatic invariant series. We temporarily assume $\sigma = 1$. In previous work \cite{brizard_actionangle_2022}, the action integral (11) is expressed explicitly as
\begin{equation}
    J_{1}({\sf e},u) \;=\; \frac{4}{3\pi}\,\nu_{0}^{3}
    \sqrt{(1 + {\sf e})^{3}}\left[ (2 - m) {\sf E}(m) \;-\frac{}{} 2\,
    (1 - m)\;{\sf K}(m)\right],
    \label{eq:J1_EK}
\end{equation}
where $1 + 2\,Y = 2\,u \equiv 4\,\nu_{0}^{2}$, ${\sf e} \equiv \epsilon r/u = {\cal O}(\epsilon)$, and $m = 2{\sf e}/(1 + {\sf e})$. Here, the complete elliptic integrals ${\sf E}(m)$ and ${\sf K}(m)$ are defined according to the Abromowitz and Stegun\cite{Abramowitz_1964} definitions, i.e.,
\[ {\sf K}(m) \;\equiv\; \int_{0}^{\pi/2} \frac{d\varphi}{\sqrt{1 - m\,\sin^{2}\varphi}}. \]
When the action integral (\ref{eq:J1_EK}) is expanded up to fourth order in ${\sf e}$, we obtain
\begin{equation}
   J_{1}({\sf e},u) \;=\; \nu_{0}^{3}{\sf e}^{2} \left( 1 \;+\; 
   \frac{3}{32}\,{\sf e}^{2} \;+\; {\cal O}({\sf e}^{5}) \right),
   \label{eq:J1_final}
\end{equation}
which is identical to Eq.~(\ref{J1_expansion}).

We now show explicitly that the action integral (\ref{slab_J}) is exactly equal to the guiding-center magnetic moment $\mu$, which is expressed in a perturbation expansion as $\mu = \mu_{0} + \mu_{1} + \mu_{2} + \cdots$, where the term $\mu_{n} = {\cal O}({\sf e}^{n+2})$. Here, the lowest-order magnetic moment and its first-order correction are
\begin{eqnarray}
    \mu_{0} &=& \frac{\epsilon^{2}r^{2}}{2\,(1 + y)} \;=\; 
    \frac{2\,\nu_{0}^{4}{\sf e}^{2}}{(1 + y)}, \label{eq:mu_0} \\
    \mu_{1} &=& \mu_{0}\,\rho_{0}\cdot\nabla\ln B \;=\; \frac{4\,\nu_{0}^{6}{\sf e}^{3}}{(1 + y)^{3}}\;\cos(2\varphi), \label{eq:mu_1}
\end{eqnarray}
where $1 + y \equiv 2\,\nu_{0}\sqrt{(1 + {\sf e}) - 2\,{\sf e}\,\sin^{2}\varphi} > 0$ and the lowest-order gyroradius is
\begin{equation}
    \rho_{0} \;\equiv\; \frac{2\,\nu_{0}^{2}{\sf e}}{(1 + y)}\; \left(
    \bm{e}_x\;\cos s \;-\frac{}{} \bm{e}_y\;\sin s \right).
\end{equation} 
Here, the gyroangle $s$ is defined as 
%$s \equiv \sigma\,2\,\varphi - \pi/2$
$s \equiv 2\,\varphi - \pi/2$ for convenience of comparison with Ref.\,\onlinecite{brizard_actionangle_2022}, which used the symbol $\zeta$ in place of $s$. %{\color{blue}It seems like this $\rho_0$ may be missing a factor of $\sigma$ in the $y$-component. --JB}

The expression for the second-order correction to the magnetic moment is derived by Lie-transform perturbation method in Tronko \& Brizard\cite{tronko_lagrangian_2015}, where
\begin{equation}
    \mu_{2} \;\equiv\; G_{2}^{\mu} \;+\; \frac{1}{2} \left[-\;\rho_{0}\cdot\nabla\mu_{1} \;+\; \mu_{1}\;\frac{\partial\mu_{1}}{\partial\mu_{0}} \;+\; \left(\frac{\partial\rho_{0}}{\partial s}\cdot\nabla\ln B\right)\;\frac{\partial\mu_{1}}{\partial s}\right],
    \label{eq:mu2_def}
\end{equation}
with the second-order magnetic-moment generator defined as
\begin{equation}
    G_{2}^{\mu} \;\equiv\; -\,\frac{1}{2}\,\left(\frac{\partial\rho_{0}}{\partial s}\cdot\nabla\ln B\right)\;\frac{\partial\mu_{1}}{\partial s} \;-\; \frac{H_{{\rm gc}2}}{B}.
\end{equation}
When the second-order magnetic moment (\ref{eq:mu2_def}) is derived for a straight magnetic field with constant gradient, we obtain $H_{{\rm gc}2} = -\,(3/4)\,\mu_{0}^{2}/B^{2}$ (identical to Burby, Squire, and Qin\cite{burby_automation_2013}), and
\begin{equation}
    \mu_{2} \;=\; \frac{3\,\mu_{0}^{2}}{4\,B^{3}}\;\left( 1 \;+\frac{}{} 4\,
    \cos^{2}(2\varphi)\right) \;=\; \frac{3}{32}\,\frac{\nu_{0}^{3}{\sf e}^{4}}{(1 + y)^{5}}\;\left( 1 \;+\frac{}{} 4\,
    \cos^{2}(2\varphi)\right).
    \label{eq:mu2_final}
\end{equation}

We now proceed with expansions of $(1 + y)^{-1}$ and $(1 + y)^{-3}$ in Eqs.~(\ref{eq:mu_0})-(\ref{eq:mu_1}) up to second order and first order in ${\sf e}$, respectively, which yields
\begin{eqnarray}
    \mu_{0} &=& \nu_{0}^{3}{\sf e}^{2} \left( 1 \;-\; \frac{1}{2}\,{\sf e}\,
    \cos(2\varphi) \;+\; \frac{3}{8}\,{\sf e}^{2}\,\cos^{2}(2\varphi) \right), \label{eq:mu0_4} \\
    \mu_{1} &=& \nu_{0}^{3}{\sf e}^{2} \left( \frac{1}{2}\,{\sf e}\,
    \cos(2\varphi) \;-\; \frac{3}{4}\,{\sf e}^{2}\,\cos^{2}(2\varphi) \right).
    \label{eq:mu1_4}
\end{eqnarray}
In Eq.~(\ref{eq:mu2_final}), there is no need to expand $(1 + y)^{-5}$ since we are already at the highest order $({\sf e}^{4})$ considered, so that
\begin{equation} 
\mu_{2} \;=\; \nu_{0}^{3}{\sf e}^{2}\;\left( \frac{3}{32}\,{\sf e}^{2} 
\;+\frac{}{} \frac{3}{8}\,{\sf e}^{2}\,\cos^{2}(2\varphi)\right).
\label{eq:mu2_4}
\end{equation}
If we now add Eqs.~(\ref{eq:mu0_4})-(\ref{eq:mu2_4}), we readily find
\begin{equation}
    \mu \;=\; \mu_{0} + \mu_{1} + \mu_{2} \;=\; \nu_{0}^{3}{\sf e}^{2}\;\left( 1 \;+\; \frac{3}{32}\,{\sf e}^{2}\right),
\end{equation}
which is exactly equal to the expansion of the action integral (\ref{eq:J1_final}) up to fourth order in ${\sf e}$.
}

%%%%%%%%%%%%%%%%%%%%%%%%%%% 
\section{Screw pinch}
\subsection{Nonperturbative adiabatic invariant}
The magnetic field is given by $\bm{B} = \nabla\psi\times\nabla\theta - \iota(\psi)\nabla\psi\times\nabla z$. Here $(r,\theta,z)$ denote standard cylindrical coordinates, where $\theta$ is the azimuthal angle. Note that the symbol $r$ was previously used to denote perpendicular velocity; there is no conflict because we do not refer to perpendicular velocity in this Section. We will treat $z$ as a periodic variable with period $2\pi$. We will also assume that $\psi = \psi(r)$ depends on radius only, and that the rotational transform $\iota(\psi) = \psi_P^\prime(\psi)$ is given as the derivative of a poloidal flux function $\psi_P(\psi)$.  The equations of motion are given by
\begin{align}
\dot{p_r}&=\sigma\psi'(r)(r^{-2}p_\theta-\iota(\psi(r))p_z)+\epsilon r^{-3}p_\theta^2\label{sp_prdot}\\
\dot{p_\theta}&=-\sigma p_r\psi'(r)\\
\dot{p_z}&=\sigma p_r\iota(\psi(r))\psi'(r)\\
\dot{r}&=\epsilon p_r\\
\dot{\theta}&=\epsilon r^{-2}p_\theta\\
\dot{z}&=\epsilon p_z,\label{sp_zdot}
\end{align}
where $(p_r,p_\theta,p_z)$ denote the covariant components of kinetic momentum associated with the cylindrical coordinate system, $\bm{p}=p_r\nabla r + p_\theta\nabla\theta + p_z\nabla z$. Note in particular that $(p_r,p_\theta,p_z)$ differ from the canonical momenta $(P_r,P_\theta,P_z)$.

The equations of motion \eqref{sp_prdot}-\eqref{sp_zdot} comprise a Hamiltonian system on the symplectic manifold $M$ parameterized by $(r,\theta,z,p_r,p_\theta,p_z)$, where we recall that $z$ and $\theta$ are each $2\pi$-periodic. The symplectic $2$-form is $\omega = - d\vartheta$, where the Liouville $1$-form is
\begin{align*}
\vartheta = \epsilon(p_r\,dr + p_\theta\,d\theta + p_z\,dz) +\sigma\,[\psi(r)\,d\theta - \psi_P(\psi(r))\,dz].
\end{align*}
Symmetry of the screw pinch magnetic field under translations in $z$ and $\theta$ imply  conservation of $z$- and $\theta$-canonical momenta,
\begin{align*}
P_z &=\epsilon\,p_z - \sigma\,\psi_P\\
P_\theta & = \epsilon\,p_\theta + \sigma\,\psi.
\end{align*}
The three conserved quantities $(P_z,P_\theta,H)$, with $H = \epsilon^2\,(p_r^2 + r^{-2}\,p_\theta^2 + p_z^2)/2$ are functionally-independent for $\epsilon$ nonzero and commute under Poisson bracket. It follows that the system is Liouville integrable.
% \begin{align*}
% dP_z\wedge dP_\theta\wedge dH &= \epsilon^4\,dp_z\wedge dp_\theta\wedge (p_rdp_r - 2r^{-3}p_\theta^2\,dr) +\sigma \epsilon^3\,dp_z\wedge d\psi\wedge (p_rdp_r + r^{-2}p_\theta dp_\theta)\\
% &-\sigma\epsilon^3\,d\psi_P\wedge dp_\theta\wedge(p_r\,dp_r + p_z\,dp_z)\\
% & = \bigg(\epsilon^3\,\sigma\,p_r\bigg)d\psi\wedge dp_r\wedge dp_z
% \end{align*}

To describe the foliation by invariant tori it is useful to introduce alternative constants of motion that are well-behaved as $\epsilon\rightarrow 0$:
\begin{align}
\Psi & = \sigma\,P_\theta = \psi + \epsilon\,\sigma\,p_\theta,\label{sp_Psi}\\
P_\parallel & = (\sigma\epsilon)^{-1}(\psi_P(\sigma\,P_\theta) + \sigma\,P_z ) = p_z + p_\theta\,\overline{\iota}(\psi,p_\theta),\label{sp_Pparallel}\\
E & = \epsilon^{-2}H = \frac{1}{2}(p_r^2 + r^{-2}\,p_\theta^2 + p_z^2),\label{sp_H}
\end{align}
where we have introduced the compact notation
\begin{align*}
\overline{\iota}(\psi,p_\theta)& = \int_0^1\iota(\psi + \lambda\epsilon\sigma p_\theta)\,d\lambda.
%\overline{\iota}_-(\psi,p_\theta)& = \int_{-1}^0\iota(\psi + \lambda\epsilon\sigma p_\theta)\,d\lambda.
\end{align*}
% For the sake of compact notation, we will also introduce
% \begin{align*}
% \overline{\iota}_+ = \iota_*(\psi,p_\theta) = \int_0^1\iota(\psi + \lambda\epsilon\sigma p_\theta)\,d\lambda.
% \end{align*}
Note that $\lim_{\epsilon\rightarrow 0}\overline{\iota}(\psi,p_\theta) = \iota(\psi)$. The limiting forms of these constants of motion as $\epsilon\rightarrow 0$ are given by
\begin{align*}
\lim_{\epsilon\rightarrow 0}\Psi &=  \psi(r) ,\\
\lim_{\epsilon\rightarrow 0}P_\parallel & = p_z + p_\theta\,\iota(\psi(r)),\\
\lim_{\epsilon\rightarrow 0}E & =  \frac{1}{2}(p_r^2 + r^{-2}\,p_\theta^2 + p_z^2).
\end{align*}
Thus, $(\Psi,P_\parallel,E)$ remain functionally-independent even when $\epsilon=0$.
It is also helpful to change coordinates on phase space from $(r,\theta,z,p_r,p_\theta,p_z)$ to $(r,\theta,z,p_\perp,\zeta,p_z)$, where
\begin{align}
p_r &= p_\perp\cos\zeta\label{sp_polar_1}\\
p_\theta & = r^2\iota(\psi)p_z - r\sqrt{1 + r^2\,\iota^2(\psi)}p_\perp\sin\zeta.\label{sp_polar_2}
%\hat{r}\bigg(\frac{\hat{r}\iota}{ 1 + \hat{r}^2\iota^2 - \hat{r}^2\iota\overline{\iota}}p_z - \frac{\sqrt{1 + \hat{r}^2\iota^2}}{1 + \hat{r}^2\iota^2 - \hat{r}^2\iota\overline{\iota}}p_\perp\sin\zeta\bigg),
%&
\end{align}
%where $\hat{r} = \psi^{-1}$ denotes the inverse of $r\mapsto \psi(r)$, $\iota = \iota(\Psi)$, and $\overline{\iota} = \overline{\iota}(\Psi,p)$
When $\epsilon = 0$, the invariant torus with constants of motion $(\Psi,P_\parallel,E)$ is parameterized explicitly by $\mathbb{T}^3\rightarrow M:(\theta,z,\zeta)\mapsto \Gamma_0(\theta,z,\zeta\mid \Psi,P_\parallel,E)$, where
\begin{align*}
\Gamma_0(\theta,z,\zeta\mid \Psi,P_\parallel,E) = \begin{pmatrix} r^*_0(\theta,z,\zeta\mid \Psi,P_\parallel,E) \\ \theta \\ z\\ p_{\perp 0}^*(\theta,z,\zeta\mid \Psi,P_\parallel,E)\\ \zeta\\p_{z0}^*(\theta,z,\zeta\mid \Psi,P_\parallel,E)\end{pmatrix},
\end{align*}
where the component functions are 
\begin{align*}
    r^*_0(\theta,z,\zeta\mid \Psi,P_\parallel,E) & = \hat{r}(\Psi)\\
    p_{\perp 0}^*(\theta,z,\zeta\mid \Psi,P_\parallel,E) & = \bigg(2E - \frac{P_\parallel^2}{1 + \hat{r}^2(\Psi)\iota^2(\Psi)}\bigg)^{1/2}\\
    p_{z0}^*(\theta,z,\zeta\mid \Psi,P_\parallel,E) & = \frac{P_\parallel}{1+\hat{r}^2(\Psi)\iota^2(\Psi)} + \frac{\iota(\Psi)\hat{r}(\Psi)}{\sqrt{1 + \hat{r}^2(\Psi)\iota^2(\Psi)}}\bigg(2E - \frac{P_\parallel^2}{1 + \hat{r}^2(\Psi)\iota^2(\Psi)}\bigg)^{1/2}\,\sin\zeta,
\end{align*}
and $\hat{r}= \psi^{-1}$. When $\epsilon $ is non-zero, but small, the torus is instead parameterized by $\mathbb{T}^3\rightarrow M:(\theta,z,\zeta)\mapsto \Gamma(\theta,z,\zeta\mid \Psi,P_\parallel,E)$, where
\begin{align*}
\Gamma(\theta,z,\zeta\mid \Psi,P_\parallel,E) = \begin{pmatrix} r^*(\theta,z,\zeta\mid \Psi,P_\parallel,E) \\ \theta \\ z\\ p_{\perp }^*(\theta,z,\zeta\mid \Psi,P_\parallel,E)\\ \zeta\\p_{z}^*(\theta,z,\zeta\mid \Psi,P_\parallel,E)\end{pmatrix},
\end{align*}
where the functions $r^*,p_\perp^*,p_z^*$ are small perturbations of their limits $r^*_0,p_{\perp 0}^*,p_{z0}^*$. These functions can be computed numerically as follows. 

%%% BACK

We wish to relate the phase space variables $(r,\theta,z,p_\perp,\zeta,p_z)$ on the invariant torus with constants of motion $\Psi,P_\parallel,E$. By the definition \eqref{sp_polar_1}-\eqref{sp_polar_2} of $(p_\perp,\zeta)$ we have
\begin{align*}
    \bigg(\frac{r^{-1}p_\theta - r\iota p_z}{\sqrt{1 + r^2\iota^2}}\bigg)^2 =p_\perp^2\sin^2\zeta.
\end{align*}
By energy conservation we have $2E = p_\perp^2\cos^2\zeta + (r^{-1}p_\theta)^2 + p_z^2$. Summing these two relations and rearranging terms implies
\begin{align*}
    p_\perp^2 = 2E + \bigg(\frac{r^{-1}p_\theta - r\iota p_z}{\sqrt{1 + r^2\iota^2}}\bigg)^2 - (r^{-1}p_\theta)^2 - p_z^2.
\end{align*}
By $P_\parallel$-conservation the $z$-momentum may be written $p_z = P_\parallel - \overline{\iota}p_\theta$. Substituting this result into the above expression for $p_\perp^2$ and completing the square implies
\begin{align*}
    p_\perp^2 &= 2E - \frac{P_\parallel^2}{1+r^2\iota^2} - \frac{r^2(\overline{\iota}-\iota)^2}{1 + r^2\iota^2}(r^{-1}p_\theta)^2 + 2\frac{r(\overline{\iota}-\iota)}{1 + r^2\iota^2}(r^{-1}p_\theta)P_\parallel\nonumber\\
    & = 2E - \frac{(P_\parallel - [\overline{\iota}-\iota]p_\theta)^2}{1 + r^2\iota^2},
\end{align*}
which expresses $p_\perp$ as a function of $r,P_\parallel,E,$ and $p_\theta$ on the invariant torus. Again recalling the definition \eqref{sp_polar_1}-\eqref{sp_polar_2} of $(p_\perp,\zeta)$, and in light of $\Psi$-conservation, we have therefore shown $p_\theta = \pi_\theta(\zeta\mid \Psi,P_\parallel,E)$, where $\pi_\theta$ is the unique solution of the fixed point problem
\begin{align}
    \pi_\theta(\zeta\mid \Psi,P_\parallel,E) = \Pi(\pi_\theta(\zeta\mid \Psi,P_\parallel,E)\mid \zeta,\Psi,P_\parallel,E ).\label{sp_fixedpoint}
\end{align}
Here the fixed point map $\Pi$ is given explicitly by
\begin{gather*}
    \Pi(p\mid\zeta,\Psi,P_\parallel,E ) = \hat{r}\frac{\hat{r}\iota P_\parallel - \sqrt{(1+\hat{r}^2\iota^2)2E - (P_\parallel - [\overline{\iota}-\iota]p)^2}\sin\zeta}{1+\hat{r}^2\iota\overline{\iota}}\nonumber\\
    \hat{r} = \hat{r}(\Psi - \epsilon\sigma p),\quad \iota = \iota(\Psi - \epsilon\sigma p),\quad \overline{\iota} = \overline{\iota}(\Psi-\epsilon\sigma p,p).
\end{gather*}
The fixed point $\pi_\theta(\zeta\mid \Psi,P_\parallel,E)$ may be computed rapidly numerically by iterating $\Pi$ on the initial guess
\begin{align*}
   \pi_{\theta}^{(0)}(\zeta\mid \Psi,P_\parallel,E) = \frac{\hat{r}^2\iota P_\parallel}{1 + \hat{r}^2\iota^2} - \frac{\hat{r}}{\sqrt{1 + \hat{r}^2\iota^2}}\sqrt{2E - \frac{P_\parallel^2}{1+\hat{r}^2\iota^2}}\sin\zeta.
\end{align*}
In other words, the sequence defined recursively by \[\pi_\theta^{(k)}(\zeta\mid \Psi,P_\parallel,E) = \Pi(\pi_\theta^{(k-1)}(\zeta\mid \Psi,P_\parallel,E)\mid \zeta,\Psi,P_\parallel,E)\] converges rapidly to the fixed point as $k$ increases. Once the value of $\pi_\theta = \pi_\theta(\zeta\mid \Psi,P_\parallel,E)$ is known, the desired functions $r^*,p_\perp^*,p_z^*$ are given by the explicit formulas
\begin{align}
    r^*(\theta,z,\zeta\mid \Psi,P_\parallel,E)& = \hat{r}(\Psi - \epsilon\sigma\pi_\theta)\label{sp_rstar}\\
    p_\perp^*(\theta,z,\zeta\mid \Psi,P_\parallel,E) & =\sqrt{2E - \frac{(P_\parallel-[\overline{\iota}-\iota]\pi_\theta)^2}{1+\hat{r}^2\iota^2}} \label{sp_pperp}\\
    p_z^*(\theta,z,\zeta\mid \Psi,P_\parallel,E) & =P_\parallel - \overline{\iota}\pi_\theta, \label{sp_pz}
\end{align}
where $\hat{r} = \hat{r}(\Psi-\epsilon\sigma \pi_\theta)$, $\iota = \iota(\Psi-\epsilon\sigma \pi_\theta)$, and $\overline{\iota} = \overline{\iota}(\Psi-\epsilon\sigma \pi_\theta,\pi_\theta)$.

A homology basis for the 3-torus $\Gamma(\Psi,P_\parallel,E)$ is given by the curves $\gamma_i: S^{1}\rightarrow \Gamma(\Psi,P_\parallel, E)$ for $i=1,2,3$, where
\begin{align*}
    \gamma_1(\zeta\mid \Psi,P_\parallel,E) = \Gamma(0,0,\sigma\zeta \mid \Psi,P_\parallel,E)\\
    \gamma_2(z\mid \Psi,P_\parallel,E) = \Gamma(0,z,0\mid \Psi,P_\parallel,E)\\
    \gamma_3(\theta\mid \Psi,P_\parallel,E) = \Gamma(\theta,0,0\mid \Psi,P_\parallel,E).
\end{align*}
(We multiply $\zeta$ by $\sigma$ as a conventional choice only; this choice ensures the loop $\gamma_1$ has the same orientation as the limiting cyclotron orbits.) Integrating the Liouville 1-form around each curve and normalizing gives the action variables $J_i(\Psi,P_\parallel,E)=(2\pi)^{-1}\oint_{\gamma_i}\vartheta$. Doing this for $J_2$ and $J_3$, we recover $P_z$ and $P_\theta$, respectively. For $J_1$, we find
\begin{align*}
    J_1 &= \frac{\epsilon}{2\pi}\int_{\gamma_1}p_rdr\\
    & = \frac{\epsilon\sigma}{2\pi}\int_0^{2\pi}p_\perp^*\,\partial_\zeta r^*\,\cos\zeta\,d\zeta\\
    & = -\frac{\epsilon^2}{2\pi}\int_0^{2\pi}\left({2E - \frac{(P_\parallel-[\overline{\iota}-\iota]\pi_\theta)^2}{1+\hat{r}^2\iota^2}}\right)^{1/2}\hat{r}^\prime(\Psi - \epsilon\sigma\pi_\theta)\,\partial_\zeta\pi_\theta\,\cos\zeta\,d\zeta.
\end{align*}
The factor of $\sigma$ on the second line appears after changing integration variables from $\zeta$ to $\sigma\zeta$. The derivative $\partial_\zeta\pi_\theta$ may be computed in terms of the values of $\pi_\theta$ by implicitly differentiating Eq.\,\eqref{sp_fixedpoint}. The result is
\begin{align*}
    \partial_\zeta\pi_\theta = \frac{\partial_\zeta\Pi(\pi_\theta\mid\zeta, \Psi,P_\parallel,E)}{1-\partial_p\Pi(\pi_\theta\mid\zeta, \Psi,P_\parallel,E)}.
\end{align*}
We conclude that $J_1$ is given by the integral
\begin{align}
    J_1(\Psi,P_\parallel,E)  = -\frac{\epsilon^2}{2\pi}\int_0^{2\pi}\left({2E - \frac{(P_\parallel-[\overline{\iota}-\iota]\pi_\theta)^2}{1+\hat{r}^2\iota^2}}\right)^{1/2}\hat{r}^\prime(\Psi - \epsilon\sigma\pi_\theta)\,\frac{\partial_\zeta\Pi}{1 - \partial_p\Pi}\,\cos\zeta\,d\zeta,\label{sp_J1}
\end{align}
where $\iota = \iota(\Psi - \epsilon\sigma\pi_\theta)$, $\overline{\iota} = \overline{\iota}(\Psi-\epsilon\sigma\pi_\theta,\pi_\theta)$, $\pi_\theta = \pi_\theta(\zeta\mid \Psi,P_\parallel,E)$, $\partial_\zeta\Pi = \partial_\zeta\Pi(\pi_\theta\mid \zeta,\Psi,P_\parallel,E)$, and $\partial_p\Pi = \partial_p\Pi(\pi_\theta\mid \zeta,\Psi,P_\parallel,E)$.

We claim that the first action variable $J_1$ is the non-perturbative adiabatic invariant for this magnetic field. We argue precisely as follows.
\begin{proposition}\label{sp_adinv_prop}
    The constant of motion $\mathcal{J} = J_1$, given in Eq.\,\eqref{sp_J1}, is a non-perturbative adiabatic invariant for Eqs.\,\eqref{sp_prdot}-\eqref{sp_zdot}. In particular, the series expansion of $J_1$ in powers of $\epsilon$ agrees with Kruskal's adiabatic invariant series to all orders in $\epsilon$.
\end{proposition}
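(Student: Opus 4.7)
The plan is to follow the proof of Proposition \ref{slab_adinv_prop} essentially verbatim: invoke the Liouville-Arnold theorem to conclude that $X_{J_1}$ generates a circle action on phase space, then verify Kruskal's three uniqueness conditions identifying $X_{J_1}$ with the roto-rate vector field $\mathcal{R}$. Conditions (1) $[X_{J_1},X_H]=0$ and (2) every integral curve of $X_{J_1}$ is $2\pi$-periodic are immediate: the first because $J_1$ is a constant of motion, the second because $J_1$ is an action variable in a Liouville-Arnold toral foliation. All the real work therefore concentrates on condition (3), $\lim_{\epsilon\to 0}X_{J_1}=\mathcal{R}_0$, where by direct analogy with the slab case $\mathcal{R}_0 = B^{-1}X_H|_{\epsilon=0}$ generates limiting gyromotion at unit cyclotron frequency, with $B = \psi^\prime(r)\sqrt{1+r^2\iota^2}/r$ the screw-pinch field strength.

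The first step is to express $X_{J_1}$ as a linear combination of the three commuting Hamiltonian flows $X_H$, $X_{P_\theta}$, $X_{P_z}$. Reading off from \eqref{sp_Psi}-\eqref{sp_H} the relations $dE = \epsilon^{-2}dH$, $d\Psi = \sigma\,dP_\theta$, and $dP_\parallel = \epsilon^{-1}(\iota(\Psi)\,dP_\theta + dP_z)$, one obtains
\[
X_{J_1} = \frac{\partial_E J_1}{\epsilon^2}\,X_H + \sigma\,\partial_\Psi J_1\,X_{P_\theta} + \frac{\partial_{P_\parallel} J_1}{\epsilon}\bigl(\iota(\Psi)\,X_{P_\theta} + X_{P_z}\bigr).
\]
A short direct calculation shows $X_{P_\theta} = \partial_\theta$ and $X_{P_z} = \partial_z$ for all $\epsilon$, so these flows are uniformly bounded. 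Consequently, if the leading-order behavior is $J_1 = \epsilon^2\,f(E,\Psi,P_\parallel) + O(\epsilon^3)$ for some smooth $f$, then $\partial_\Psi J_1$ and $\partial_{P_\parallel} J_1$ are themselves $O(\epsilon^2)$, the second and third terms above drop out, and $\lim_{\epsilon\to 0}X_{J_1} = \partial_E f\cdot X_H|_{\epsilon=0}$.

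The key observation that makes the leading-order behavior of $J_1$ computable is that the fixed-point problem \eqref{sp_fixedpoint} becomes trivial at $\epsilon=0$: since $\overline{\iota}\to\iota$, the map $\Pi$ loses its $p$-dependence and $\pi_\theta$ reduces to the explicit initial guess $\pi_\theta^{(0)}$. Substituting into the integrand of \eqref{sp_J1} and evaluating the elementary $\cos^2\zeta$ integral (using $\hat r\hat r^\prime/\sqrt{1+\hat r^2\iota^2} = 1/B$), I would obtain
\[
J_1 = \frac{\epsilon^2}{2B}\Bigl(2E - \frac{P_\parallel^2}{1+\hat r^2\iota^2}\Bigr) + O(\epsilon^3) = \epsilon^2\,\mu_0 + O(\epsilon^3),
\]
identifying the leading term as the familiar magnetic moment $\mu_0 = v_\perp^2/(2B)$. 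Since $\partial_E\mu_0 = 1/B$, the preceding analysis then gives $\lim_{\epsilon\to 0}X_{J_1} = B^{-1}X_H|_{\epsilon=0} = \mathcal{R}_0$, verifying condition (3). Kruskal's uniqueness theorem identifies $X_{J_1}$ with the roto-rate $\mathcal{R}$ as a formal power series in $\epsilon$, whence $dJ_1 = \iota_{X_{J_1}}\omega = \iota_\mathcal{R}\omega = d\mu$ and $J_1$ agrees with Kruskal's series to all orders modulo an unimportant constant.

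The main obstacle is the bookkeeping surrounding the implicitly-defined function $\pi_\theta$ inside the integrand of \eqref{sp_J1}. Rigorously passing to $\epsilon\to 0$ under the integral hinges on the observation $\partial_p\Pi = O(\epsilon)$: the $p$-dependence of $\Pi$ enters only through the shift $\Psi - \epsilon\sigma p$ in the arguments of $\hat r,\iota,\overline\iota$ and through the $O(\epsilon)$ deviation $\overline\iota - \iota$. Consequently $1 - \partial_p\Pi = 1 + O(\epsilon)$ is uniformly bounded away from zero, the implicit function theorem produces a smooth $\epsilon$-family of fixed points $\pi_\theta$ converging uniformly to $\pi_\theta^{(0)}$, and dominated convergence legitimates taking the $\epsilon\to 0$ limit through the integral and through the partial derivatives $\partial_E$, $\partial_\Psi$, $\partial_{P_\parallel}$. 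Once these analytic justifications are in place the rest of the argument is purely formal and Proposition \ref{sp_adinv_prop} follows.
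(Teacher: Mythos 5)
Your proposal follows essentially the same route as the paper: invoke Liouville--Arnold for the circle action, verify Kruskal's three uniqueness conditions, and reduce the whole problem to showing $\lim_{\epsilon\to 0}X_{J_1}=\mathcal{R}_0$ by decomposing $X_{J_1}$ over the commuting flows and checking that only the $X_H$ term survives. The paper carries out the same calculation with the decomposition written in terms of $X_\Psi, X_{P_\parallel}, X_E$ rather than $X_H, X_{P_\theta}, X_{P_z}$, but these differ only by the chain rule. One small improvement in your write-up is that you explain \emph{why} the leading term of $J_1$ is computable in closed form -- the fixed-point map $\Pi$ loses its $p$-dependence at $\epsilon=0$, reducing $\pi_\theta$ to the explicit initial guess -- and you record the analytic scaffolding (the bound $\partial_p\Pi=O(\epsilon)$, the implicit function theorem, dominated convergence) needed to justify differentiating the integral and passing to the limit, which the paper leaves implicit by simply quoting the expansion of $J_1$.
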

\begin{proof}
The proof proceeds exactly as in the proof of Prop. \ref{slab_adinv_prop}. In particular, since we already know that $X_{J_1}$ generates a circle action that leaves Lorentz force dynamics invariant, it is enough to show that $\lim_{\epsilon\rightarrow 0}X_{J_1} = \mathcal{R}_0$, where $\mathcal{R}_0$ denotes the limiting roto-rate,
\begin{align*}
    \mathcal{R}_0 = \frac{1}{B(r)}\lim_{\epsilon\rightarrow 0}X_{H} = \frac{\sigma}{r^{-1}\sqrt{1+r^2\iota^2}}\bigg([r^{-2}p_\theta - \iota p_z]\partial_r- p_r\partial_\theta + \iota\,p_r\,\partial_z\bigg).
\end{align*}
Here we have used a convenient formula for the magnetic field strength, $B(r) = r^{-1}\sqrt{1+r^2\iota^2}\,\psi^\prime$.

The first two non-vanishing terms in the series expansion for $J_1$ are given by 
\begin{align}
	J_1(\Psi,P_\parallel,E)&=\epsilon^2\frac{ \hat r \hat r' }{2\sqrt{1+\hat r^2 \iota^2 }}\left(2E-\frac{P_\parallel^2}{1+\hat r^2\iota^2}\right) \nonumber\\
    &+\sigma\epsilon^3\frac{P_\parallel \hat r^2 }{4(1+\hat r^2 \iota^2 )^{7/2}}\bigg(-2\hat r'^2 \iota (6E-3P_\parallel^2+2(3E+P_\parallel^2)\hat r^2 \iota^2 )\nonumber\\
    &-2\hat r \iota (1+\hat r^2 \iota^2 )(2E[1+\hat r^2 \iota^2 ]-P_\parallel^2)\hat r'' \nonumber\\
    &+\hat r \hat r' (-2E+P_\parallel^2+2(E-2P_\parallel^2)\hat r^2 \iota^2 +4E\hat r^4 \iota^4 )\iota' \bigg)+O(\epsilon^4),\label{sp_J1_aa}
\end{align}
where $\hat{r} = \hat{r}(\Psi)$ and $\iota = \iota(\Psi)$.
The leading-order derivatives of $J_1$ are therefore  
\begin{align*}
    \partial_E J_1(\Psi,P_\parallel,E)&=\epsilon^2\frac{\hat{r}\hat{r}'}{\sqrt{1+\hat{r}^2 \iota^2}}+O(\epsilon^3)=\epsilon^2\frac{1}{B(\hat{r})} + O(\epsilon^3)\\
    \partial_\Psi J_1(\Psi,P_\parallel,E)&=\epsilon^2\frac{1}{2(1+\hat{r}^2 \iota^2)^{5/2}} \bigg((2E-P_\parallel^2+2(E+P_\parallel^2)\hat r^2\iota^2)\hat r '^2\\
            &+\hat r^3\iota(3P_\parallel^2-2E(1+\hat r^2\iota^2))\hat r'\iota'\\
            &-\hat r(1+\hat r^2\iota^2)(P_\parallel^2-2E(1+\hat r^2\iota ^2)\hat r'' \bigg)+O(\epsilon^3)\\
    \partial_{P_\parallel} J_1(\Psi,P_\parallel,E)&=-\epsilon^2\frac{ P_\parallel \hat r\hat r'}{(1+\hat r^2\iota^2)^{3/2}}+O(\epsilon^3).
\end{align*}
%with the shorthand $\hat r =\hat r(\Psi),\hat\iota=\hat\iota(\Psi)$. \\
These derivatives enable the following computation of $\lim_{\epsilon\rightarrow 0}X_{J_1}$:
\begin{align*}
    \lim_{\epsilon\rightarrow 0}X_{J_1} & = \lim_{\epsilon\rightarrow 0}\bigg(\partial_{\Psi}J_1X_\Psi+\partial_{P_\parallel}J_1 X_{P_\parallel} + \partial_{E}J_1 X_{E} \bigg)\\
    & = \lim_{\epsilon\rightarrow 0}\bigg(\partial_{\Psi}J_1(\sigma\partial_\theta)+\frac{1}{\epsilon}\partial_{P_\parallel}J_1 (\iota\,\partial_\theta + \partial_z) + \frac{1}{\epsilon^2}\partial_{E}J_1 X_{H} \bigg)\\
    & = \lim_{\epsilon\rightarrow 0}\bigg( \frac{1}{\epsilon^2}\partial_{E}J_1 X_{H} \bigg)\\
    & = \frac{1}{B(r)}\lim_{\epsilon\rightarrow 0}X_H = \mathcal{R}_0,
\end{align*}
which is the desired result.

% We compare this to the roto-rate. Here 
% \begin{align*}
%     \bm v &=\epsilon p_r\partial_r+\epsilon r^{-2}p_\theta\partial_\theta+\epsilon p_z\partial_z\\
%     \bm b &=\frac{\iota}{\sqrt{1+r^2\iota^2}}\partial_\theta+\frac{1}{\sqrt{1+r^2\iota^2}}\partial_z
% \end{align*}
% which gives \[
% R_0=\bm v\times\bm b\cdot\partial_{\bm v}=\frac{r}{\sqrt{1+r^2\iota^2}}([r^{-2}p_\theta-\iota p_z]\partial_{p_r}-p_r\partial_{p_\theta}+p_r\iota\partial_{p_z})=\lim_{\epsilon\to 0} X_{J_3},
% \] as desired.
\end{proof}

\subsection{Direct comparison with Kruskal's series}
When $\epsilon\ll 1$, which corresponds to the asymptotic regime where traditional guiding center theory applies, it is sensible to expand Eq.\,\eqref{sp_J1} in powers of $\epsilon$. The first two non-vanishing terms in the series are recorded in Eq.\,\eqref{sp_J1_aa}, $J_1(\Psi,P_\parallel,E) = \epsilon^2\,J_{12}(\Psi,P_\parallel,E) + \epsilon^3\,J_{13}(\Psi,P_\parallel,E)+...$. Substituting Eqs.\,\eqref{sp_Psi}-\eqref{sp_H} into these formulas then leads to the first two non-vanishing terms of the nonperturbative invariant $\mathcal{J} = \epsilon^2\mathcal{J}_2 + \epsilon^3\mathcal{J}_3 + \dots$ as functions of $(r,\theta,z,p_r,p_\theta,p_z)$. In light of Prop. \ref{sp_adinv_prop}, these terms must agree with the known explicit expressions for the first two terms in Kruskal's adiabatic invariant series in general magnetic geometries, usually denoted $\mu_0$ and $\mu_1$. In order to emphasize the power of Prop.\,\ref{sp_adinv_prop}, we will now explicitly compare $\mu_0$ with $\mathcal{J}_2$ and $\mu_1$ with $\mathcal{J}_3$, using the formulas in Refs. \onlinecite{weyssow_hamiltonian_1986,burby_automation_2013} for $\mu_0,\mu_1$.

%To compare with the series expression for $J_3$, we compute the first two terms of the adiabatic invariant series for this system. 

The expressions for the first two terms in Kruskal's adiabatic invariant series are usually expressed in terms of $\bm{B}$ and its derivatives, together with components of the particle velocity $\bm{v}$. In this case the magnetic field is\[
\bm B=\frac{1}{r}\iota \psi'\partial_\theta +\frac{1}{r}\psi'\partial_z
\] and the metric tensor is $g = dr^2 + r^2d\theta^2 + dz^2$. As mentioned in the proof of Prop.\,\ref{sp_adinv_prop},  the magnetic field strength is therefore \[
|\bm B|^2 = g(\bm B,\bm B)=r^{-2}(1+r^2\iota^2)(\psi')^2.
\] The unit vector in the magnetic field direction is \[
\bm b=\bm B/|\bm B|=\frac{\iota}{\sqrt{1+r^2\iota^2}}\partial_\theta+\frac{1}{\sqrt{1+r^2\iota^2}}\partial_z.
\]
Furthermore, the particle velocity is \[
\bm v= p_r\partial_r+ r^{-2}p_\theta \partial_\theta+ p_z\partial_z.
\] The dot and cross products with $\bm b$ are 
\begin{align*}
\bm v\cdot\bm b &= \frac{p_z+\iota p_\theta}{\sqrt{1+r^2\iota^2}}\\
\bm v\times \bm b&=\frac{1}{\sqrt{1+r^2\iota^2}}\left((r^{-1}p_\theta-p_z\iota r)\partial_r- p_r r^{-1}\partial_\theta+p_r\iota r\partial_z\right).
\end{align*} 

For $\mu_0 = |\bm{v}\times\bm{b}|^2/(2|\bm{B}|)$, we need 
\[
|\bm v\times\bm b|^2=\left(p_r^2+\frac{(p_\theta-\iota p_zr^2)^2}{r^2(1+r^2\iota^2)}\right).
\] 
Thus, in $(r,\theta,z,p_r,p_\theta,p_z)$-coordinates, 
\[
\mu_0=\frac{|\bm v\times\bm b|^2}{2|\bm B|}=\frac{r }{2\psi'(r)\sqrt{1+r^2\iota^2}}\left(p_r^2+\frac{(p_\theta-\iota p_z r^2)^2}{r^2(1+r^2\iota^2)}\right).
\] 
This expression agrees with the leading-order term in Eq.\,\eqref{sp_J1_aa} when expressed in terms of $(r,\theta,z,p_r,p_\theta,p_z)$, as expected.

Next we compute the first correction,  
\begin{align*}
\mu_1 &=\mu_0\frac{(\bm b\times \bm v)\cdot\nabla|\bm B|}{|\bm B|^2}+\frac{1}{4}\frac{(\bm v\cdot\bm b)\bm v\cdot\nabla\bm b\cdot(\bm v\times\bm b)}{|\bm B|^2}\\
&-\frac{3}{4}\frac{(\bm v\cdot\bm b)(\bm v\times\bm b)\cdot\nabla\bm b\cdot\bm v}{|\bm B|^2}-\frac{5}{4}\frac{(\bm v\cdot\bm b)^2\bm\kappa\cdot (\bm v\times\bm b)}{|\bm B|^2},
\end{align*}
where $\bm\kappa=\bm b\cdot\nabla\bm b$. See Eq.\,(29) in Ref.\,\onlinecite{burby_automation_2013}. For general $\bm V=V^r\partial_r+V^\theta\partial_\theta+V^z\partial_z$, we have \[
\bm V\cdot\nabla \bm b=V_r\left(\frac{\partial b^\theta}{\partial r}+\frac{1}{r}b^\theta\right)\partial_\theta+V_r\frac{\partial b^z}{\partial r}\partial_z-V^\theta rb^\theta\partial_r.
\] 
Using a computer algebra system it is straightforward to find 
\begin{align*}
\mu_1&=-\frac{2r\psi''}{4 r^2 (r^2
   \iota^2+1)^{7/2} (\psi')^3}\bigg( (r^2 \iota^2+1)  (p_\theta-p_z r^2 \iota) (r^4
   (p_r^2+p_z^2) \iota^2+p_r^2
   r^2+p_\theta^2-2 p_\theta p_z r^2 \iota)\bigg)\\
   %%%
   &-\frac{2\psi'}{4 r^2 (r^2
   \iota^2+1)^{7/2} (\psi')^3}\bigg(
    r^2 \iota \bigg[p_z (3
   p_r^2 r^2+5 p_\theta^2)\\
   &+\iota [r^2 \iota
    \{2 p_\theta \iota (r^2
   (p_r^2+3 p_z^2)-p_\theta^2+p_\theta p_z
   r^2 \iota)+p_z r^2 (3 p_r^2+5
   p_z^2)\\
   &-8 p_\theta^2 p_z\}+p_\theta r^2
   (p_r^2-9 p_z^2)+2
   p_\theta^3]\bigg]-p_\theta (p_r^2
   r^2+p_\theta^2)\bigg)\\
   %%%
   &-\frac{r^3 (\psi')^2 \iota'}{4 r^2 (r^2
   \iota^2+1)^{7/2} (\psi')^3}\bigg(
   p_z (p_r^2 r^2+3 p_\theta^2)\\
   &+\iota  \bigg[r^2 \iota [3 p_\theta r^2
   (p_r^2+3 p_z^2) \iota-2 p_z r^4
   (p_r^2+p_z^2) \iota^2\\
   &+p_z r^2
   (3 p_z^2-p_r^2)-12 p_\theta^2
   p_z]+3 p_\theta r^2 (p_r^2-2
   p_z^2)+5 p_\theta^3\bigg]\bigg).
\end{align*}
We have also used a computer algebra system confirm that this expression agrees with the coefficient in front of $\epsilon^3$ in Eq.\,\eqref{sp_J1_aa} when written in terms of $(r,\theta,z,p_r,p_\theta,p_z)$.

\subsection{Nonperturbative guiding center model}
We may now identify the non-perturbative guiding center equations of motion, as formulated in Ref.\,\onlinecite{j_w_burby_nonperturbative_2025} and the associated supplemental material, for the screw pinch. By using $\mathcal{J} = J_1$ as the non-perturbative adiabatic invariant the guiding center model we derive in this manner should make predictions that agree exactly with those of the full-order Lorentz force model. If we use a truncated power series expansion of $J_1$ for $\mathcal{J}$ the model should instead agree with the traditional guiding center model, truncated at some order in $\epsilon$. 

The nonperturbative guiding center formalism from Ref.\,\cite{j_w_burby_nonperturbative_2025} requires two inputs: a Poincar\'e section $\Sigma\subset M$ for gyromotion and an expression for the adiabatic invariant $\mathcal{J}$. The Poincar\'e section serves as the $5D$ guiding center phase space. We will work in the coordinates $(r,\theta,z,p_\perp,\zeta,p_z)$ on $M$ and define $\Sigma = \{\zeta = 0\}$, so that $(r,\theta,z,p_\perp,p_z)$ provides a simple parameterization of $\Sigma$. We will also refer to the notation developed in Ref.\,\cite{j_w_burby_nonperturbative_2025} and the associated supplemental material. 

The Hamiltonian restricted to $\Sigma$ is 
\[
H_\Sigma=H|_\Sigma=\frac{\epsilon^2}{2}\left(p_\perp^2+[1+r^2\iota^2(\psi)]p_z^2\right).
\] 
To compute the Poisson bracket $\{\cdot,\cdot\}_\Sigma$ (see Theorem 2 in the supplemental material), we need to compute the pairwise Poisson brackets of the coordinates $(r,\theta,z,p_\perp,\zeta,p_z)$. We do this by computing the symplectic form in these coordinates and then computing the Hamiltonian vector fields of the coordinate functions from that. First compute differentials of $p_\theta$ and $p_r$: \[
dp_r=\cos\zeta dp_\perp-p_\perp\sin\zeta d\zeta,
\]
\begin{align*}
    dp_\theta &= d(r^2\iota(\psi)p_z-rp_\perp\sin\zeta\sqrt{1+r^2\iota(\psi)^2})\\
    &= u(r,p_\perp,\zeta,p_z)dr+r^2\iota(\psi)dp_z-r\sin\zeta\sqrt{1+r^2\iota(\psi)^2}dp_\perp-r p_\perp\cos\zeta\sqrt{1+r^2\iota(\psi)^2}d\zeta,
\end{align*}
where \[
u(r,p_\perp,\zeta,p_z)=[2r\iota(\psi)+r^2\psi'\iota'(\psi)]p_z-p_\perp\sin\zeta\left(\sqrt{1+r^2\iota(\psi)^2}+\frac{r^2\iota(\psi)(\iota(\psi)+r\iota'(\psi)\psi'}{\sqrt{1+r^2\iota(\psi)^2}})\right).
\]
The symplectic form in these coordinates is thus
\begin{align*}
\omega=-d\vartheta&=\epsilon(\cos\zeta dr\wedge dp_\perp-p_\perp\sin\zeta dr\wedge d\zeta+u(r,p_\perp,\zeta,p_z)d\theta\wedge dr+r^2\iota(\psi)d\theta\wedge dp_z\\
&-r\sin\zeta\sqrt{1+r^2\iota(\psi)^2}d\theta\wedge dp_\perp-rp_\perp\cos\zeta\sqrt{1+r^2\iota(\psi)^2}d\theta\wedge d\zeta+dz\wedge dp_z)\\
&-\sigma\psi' (dr\wedge d\theta-\iota(\psi)dr\wedge dz).
\end{align*}
By inverting the matrix associated with $\omega$ we obtain the Hamiltonian vector fields associated with the coordinate functions:
\begin{align*}
    X_r&=-\epsilon^{-1}\cos\zeta \partial_{p_\perp}+\epsilon^{-1}p_\perp^{-1}\sin\zeta \partial_{\zeta}\\
    X_\theta&=\epsilon^{-1}\frac{r^{-1}\sin\zeta}{\sqrt{1+r^2\iota^2}}\partial_{p_\perp}+\epsilon^{-1}\frac{r^{-1}p_\perp^{-1}\cos\zeta}{\sqrt{1+r^2\iota^2}}\partial_\zeta\\
    X_z&=-\epsilon^{-1}\frac{r\iota\sin\zeta}{\sqrt{1+r^2\iota^2}}\partial_{p_\perp}-\epsilon^{-1}\frac{r\iota p_\perp^{-1}\cos\zeta}{\sqrt{1+r^2\iota^2}}\partial_\zeta-\epsilon^{-1}\partial_{p_z}\\
    X_{p_\perp}&=\epsilon^{-1}\cos\zeta\partial_r-\epsilon^{-1}\frac{r^{-1}\sin\zeta}{\sqrt{1+r^2\iota^2}}\partial_\theta+\epsilon^{-1}\frac{r\iota\sin\zeta}{\sqrt{1+r^2\iota^2}}\partial_z\\
    &+\epsilon^{-2}\frac{r^{-1}p_\perp^{-1}(\epsilon u+\sigma\psi'(1+r^2\iota^2))}{\sqrt{1+r^2\iota^2}}\partial_\zeta+\epsilon^{-2}\sigma\iota\psi'\cos\zeta\partial_{p_z}\\
    X_\zeta&=-\epsilon^{-1}p_\perp^{-1}\sin\zeta\partial_r-\epsilon^{-1}\frac{r^{-1}p_\perp^{-1}\cos\zeta}{\sqrt{1+r^2\iota^2}}\partial_\theta+\epsilon^{-1}\frac{r\iota p_\perp^{-1}\cos\zeta}{\sqrt{1+r^2\iota^2}}\partial_z\\
    &-\epsilon^{-2}\frac{r^{-1}p_\perp^{-1}(\epsilon u+\sigma\psi'(1+r^2\iota^2))}{\sqrt{1+r^2\iota^2}}\partial_{p_\perp}-\epsilon^{-2}\sigma\iota p_\perp^{-1}\psi \sin\zeta\partial_{p_z}\\
    X_{p_z}&=\epsilon^{-1}\partial_z-\epsilon^{-2}\sigma\iota\psi'\cos\zeta\partial_{p_\perp}+\epsilon^{-2}\sigma\iota p_\perp^{-1}\iota\psi'\sin\zeta\partial_{\zeta}.
\end{align*}
The coefficients of these vector fields immediately give the Poisson brackets among the coordinates. The column vector $N_\Sigma$ (again we refer to the supplemental material) is thus \[
N_\Sigma=\frac{p_\perp^{-1}}{\sqrt{1+r^2\iota^2}}\begin{pmatrix}
    0\\
    -\epsilon^{-1}r^{-1}\\
    \epsilon^{-1}r\iota \\
    -\epsilon^{-2}r^{-1}(\epsilon u|_{\zeta=0}+\sigma\psi'(1+r^2\iota^2))\\
    0\\
\end{pmatrix},
\] and the matrix $\mathbb J_\Sigma$ is
\[
\mathbb J_\Sigma=\begin{pmatrix}
    0&0&0&\epsilon^{-1}&0\\
    0&0&0&0&0\\
    0&0&0&0&\epsilon^{-1}\\
    -\epsilon^{-1}&0&0&0&-\epsilon^{-2}\sigma\iota\psi'\\
    0&0&-\epsilon^{-1}&\epsilon^{-2}\sigma\iota\psi'&0
\end{pmatrix}.
\] 
We also find that the partial derivatives of $H_\Sigma$ are given by
\[
\partial_\sigma H_\Sigma=
\epsilon^2
\begin{pmatrix}
rp_z^2\iota(\iota+r\iota'\psi')\\
0\\
0\\
p_\perp\\
p_z(1+r^2\iota^2)
\end{pmatrix}.
\]
This implies 
\begin{align*}
\frac{N_\Sigma^T\partial_\sigma H_\Sigma}{N_\Sigma^T\partial_\sigma J_\Sigma}=\frac{\epsilon^2 p_\perp}{\partial_{p_\perp} J_\Sigma},\partial_\sigma J_\Sigma^T\mathbb J_\Sigma\partial_\sigma H_\Sigma&=\sigma\iota\psi'(p_\perp\partial_{p_z}J_\Sigma-p_z(1+r^2\iota^2)\partial_{p_\perp}J_\Sigma)\\
&+\epsilon (p_\perp\partial_r J_\Sigma-rp_z^2\iota(\iota+r\iota'\psi')\partial_{p_\perp}J_\Sigma),
\end{align*}
where we've used the fact that $J_\Sigma$ is independent of $\theta$ and $z$ to drop some terms. We can now write down the non-perturbative equations of motion by computing Poisson brackets $\{z,H_\Sigma\}_\Sigma$, where $z$ is any of $(r,z,\theta,p_\perp,p_z)$:
\begin{align}
    \dot r &= 0\label{nonpertgc1}\\
    \dot\theta&=-\frac{\epsilon (\partial_{p_\perp}J_\Sigma)^{-1}}{\epsilon u +\sigma\psi'(1+r^2\iota^2)}\partial_\sigma J_\Sigma^T\mathbb J_\Sigma\partial_\sigma H_\Sigma\label{nonpertgc2}\\
    \dot z&=\epsilon p_z(1+r^2\iota^2)-\epsilon p_\perp\frac{\partial_{p_z}J_\Sigma}{\partial_{p_\perp}J_\Sigma}+\frac{\epsilon r^2\iota(\partial_\sigma J_\Sigma^T\mathbb J_\Sigma\partial_\sigma H_\Sigma)(\partial_{p_\perp}J_\Sigma)^{-1}}{\epsilon u+\sigma\psi'(1+r^2\iota^2)}\label{nonpertgc3}\\
    \dot p_\perp&=0\label{nonpertgc4}\\
    \dot p_z&=0.\label{nonpertgc5}
\end{align}
Note that we should have anticipated constancy of $r,p_\perp,p_z$ in the nonperturbative guiding center model because the constants of motion $(\Psi,P_\parallel,E)$ are each functions of $(r,p_\perp,p_z)$ when restricted to $\Sigma$.
\subsection{Numerical assessment of nonperturbative guiding center theory}
We will now assess the predictions of nonperturbative guiding center theory for the screw pinch field in three ways. (I) In order to determine when the nonperturbative formalism should be preferred over traditional guiding center theory we will study how the optimal truncation order for the series expansion of $J_1$ varies with $\epsilon$. (II) In order to verify the striking prediction that $X_{J_1}$ is the infinitesimal generator of a cricle action when $J_1$ is given by the full integral expression \eqref{sp_J1}  we will numerically test whether the integral curves of $X_{J_1}$ are $2\pi$-periodic. (III) In order to verify that the nonperturbative guiding center equations of motion agree exactly with the full-orbit model when $\mathcal{J}$ is the exact nonperturbative invariant we will compare the orbital frequencies of full-orbit with those of the nonperturbative model.
\\ \\
\noindent (I) \textbf{Optimal truncation order.} We consider the following setup:
\[
    \Psi=1,P_\parallel=0.5,E=3,\sigma=1
\]
\[
\iota(\psi)=\sqrt{2},\psi(r)=r^2.
\]
Once we fix $\epsilon$, we can numerically compute the exact $J_1$ by iterating the fixed point map to get $\pi_\theta$ and then applying the trapezoid rule. Since the integrand is analytic, periodic, and is being integrated over its period, we get exponential convergence, with on the order of 10-20 fixed point iterates and mesh points needed for machine precision. Explicitly, the first several terms in the asymptotic series for $J_1$ using the formulas above are (to four digits)
\[
J_1(1,0.5,3)=0.8540\epsilon^2-0.0019\epsilon^3-0.0940\epsilon^4-0.0842\epsilon^5+O(\epsilon^6).
\]
\begin{figure}
    \centering
    \includegraphics[width=1.0\linewidth]{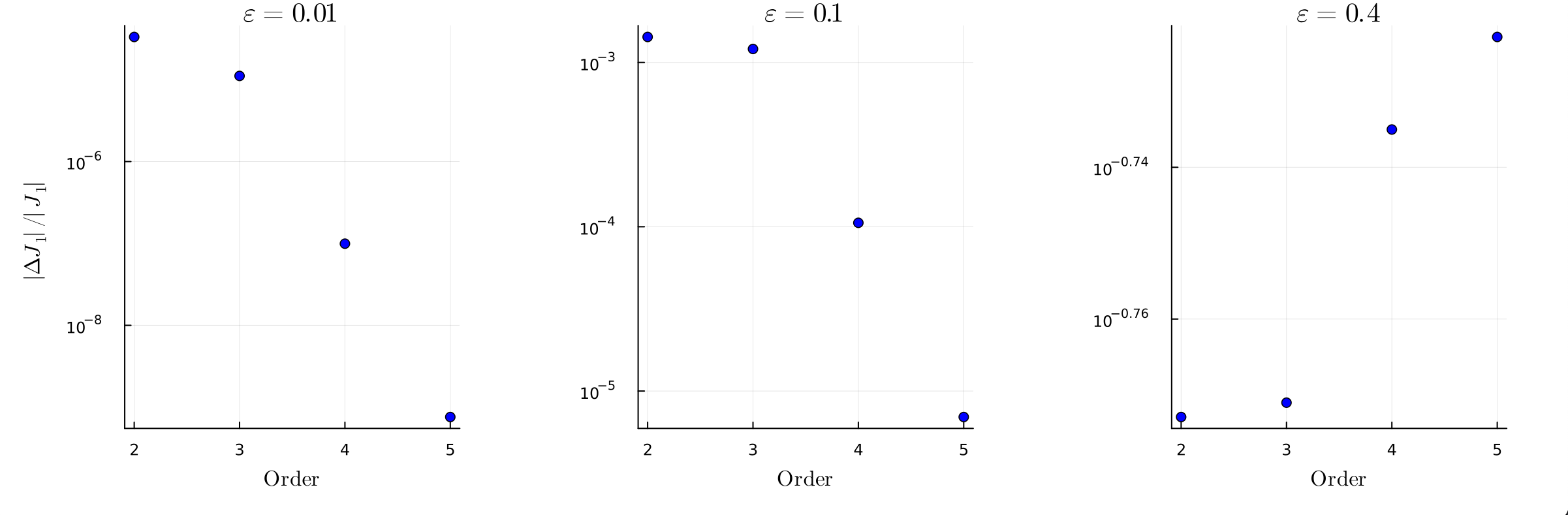}
    \caption{Deviation between exact adiabatic invariant and truncations of its power series in $\epsilon$ vs truncation order.}
    \label{fig:series_collapse}
\end{figure}
We compare these to the numerical $J_1$ in Fig. \ref{fig:series_collapse} ($\Delta J_1$ is the difference). Note that for small $\epsilon$ we see each term improves accuracy, confirming the accuracy of our computations. However, once $\epsilon$ exceeds about $.4$ the series expansion begins to poorly approximate the true invariant. Note that the series for this field breaks down at larger $\epsilon_{\text{breakdown}} \sim .4$ than observed in Ref.\,\onlinecite{j_w_burby_nonperturbative_2025}, where $\epsilon_{\text{breakdown}}\sim .15$. This indicates that there is no ``universal" value for $\epsilon$ above which the traditional series expansions break down. We anticipate that $\epsilon_{\text{breakdown}}$ is smaller in magnetic fields with particle orbits that are further from integrability, but this phenomenon deserves further study.
\\ \\
\noindent (II) \textbf{Symmetry periodicity.} In order to compute the Hamiltonian flow and the non-perturbative equations of motion for $J_1(\Psi,P_\parallel,E)$, we first compute $\partial_\Psi J_1,\partial_{P_\parallel}J_1,$ and $\partial_E J_1$ in terms of derivatives of $\pi_\theta$. Using the shorthand %{\color{green} not sure this is the best choice for variable name, feel free to change}
\[p_\perp=\left(2E -\frac{(P_\parallel - [\bar\iota-\iota]\pi_\theta)^2}{1+\hat r^2\iota^2}\right)^{1/2},\] 
and differentiating under the integral sign, these are
\begin{align*}
\partial_\Psi J_1&=-\frac{\epsilon^2}{2\pi}\int_0^{2\pi}[\partial_\zeta \pi_\theta \hat r' \partial_\Psi p_\perp-p_\perp(\partial_\zeta\pi_\theta \hat r''(-1+\sigma \epsilon \partial_\Psi\pi_\theta)-\hat r'\partial_\Psi\partial_\zeta \pi_\theta)]\cos\zeta d\zeta\\
    \partial_{P_\parallel}J_1&=-\frac{\epsilon^2}{2\pi}\int_0^{2\pi}[\partial_\zeta \pi_\theta(\hat r'\partial_{P_\parallel}p_\perp-\sigma \epsilon p_\perp \hat r''\partial_{P_\parallel}\pi_\theta)+p_\perp\hat r'\partial_{P_\parallel}\partial_{\zeta}\pi_\theta]\cos\zeta d\zeta\\
\partial_E J_1&=-\frac{\epsilon^2}{2\pi}\int_0^{2\pi}[\partial_\zeta\pi_\theta (\hat r'\partial_E p_\perp -\sigma\epsilon p_\perp \hat r''\partial_E\pi_\theta)+p_\perp\hat r'\partial_E\partial_\zeta\pi_\theta]\cos\zeta d\zeta\\
\end{align*}
where 
\begin{align*}
    \partial_\Psi p_\perp&=(P_\parallel +(\iota-\bar\iota)\pi_\theta)(-2\hat r\iota(P_\parallel+(\iota-\bar\iota)\pi_\theta)(\iota\hat r'+\hat r\iota')(-1+\epsilon\sigma \partial_\Psi \pi_\theta)\\
    &-2(1+\hat r^2\iota^2)((\iota-\bar\iota)\partial_\Psi\pi_\theta+\pi_\theta(-\partial_{p_\theta}\bar\iota\partial_\Psi\pi_\theta+\iota'(1-\epsilon\sigma \partial_\Psi\pi_\theta)+\partial_\psi\bar\iota(-1+\epsilon\sigma\partial_\Psi\pi_\theta))))\\
    &/(2(1+\hat r^2\iota^2)^2(2E-(P_\parallel+(\iota-\bar\iota)\pi_\theta)^2/(1+\hat r^2\iota^2))^{1/2})\\
    \partial_{P_\parallel} p_\perp&=(P_\parallel +(\iota-\bar\iota)\pi_\theta)(-2\epsilon\sigma\hat r\iota(P_\parallel +(\iota-\bar\iota)\pi_\theta)(\iota\hat r'+\hat r\iota')\partial_{P_\parallel}\pi_\theta\\
    &-2(1+\hat r^2\iota^2)(1+(\iota-\bar\iota-\pi_\theta(\partial_{p_\theta}\bar\iota+\epsilon\sigma(\iota'-\partial_\psi\bar\iota)))\partial_{P_\parallel}\pi_\theta))\\
    &/(2(1+\hat r^2\iota^2)^2(2E-(P_\parallel+(\iota-\bar\iota)\pi_\theta)^2/(1+\hat r^2\iota^2))^{1/2})
    \\
    \partial_E p_\perp&=(2-(2\epsilon\sigma \hat r\iota(P_\parallel+(\iota-\bar\iota)\pi_\theta)^2(\iota\hat r'+\hat r\iota')\partial_E\pi_\theta)/(1+\hat r^2\iota^2)^2\\
    &-(2(P_\parallel+(\iota-\bar\iota)\pi_\theta)(\iota-\bar\iota-\pi_\theta(\partial_{p_\theta}\bar\iota+\epsilon\sigma(\iota'-\partial_\psi\bar\iota)))\partial_E\pi_\theta/(1+\hat r^2\iota^2)))\\
    &/(2(2E-(P_\parallel+(\iota-\bar\iota)\pi_\theta)^2/(1+\hat r^2\iota^2))^{1/2})
\end{align*}
and $\hat r'=\hat r'(\Psi-\epsilon\sigma\pi_\theta)$. Once we have an accurate value of $\pi_\theta$ from fixed point iteration, we can directly compute its partial derivatives from the partial derivatives of $\Pi$. We compute these numerically via automatic differentiation.\cite{RevelsLubinPapamarkou2016} In our original phase space coordinates, the Hamiltonian flow for $J_1$ is the solution to the ODE system
\begin{align*}
\dot p_r&=\epsilon^{-2}\partial_EJ_1[(\sigma \psi')(r^{-2}p_\theta-\iota(\psi)p_z)+\epsilon r^{-3}p_\theta^2]\\
\dot p_\theta &=-\sigma\epsilon^{-2}\partial_E J_1 p_r\psi'\\
\dot p_z &= \sigma\epsilon^{-2}\partial_E J_1 p_r \iota(\psi)\psi'\\
\dot r&=\epsilon^{-1}\partial_E J_1 p_r\\
\dot \theta&= \epsilon^{-1}(r^{-2}p_\theta\partial_E J_1+\partial_{P_\parallel}J_1\iota(\psi+\sigma \epsilon p_\theta))+\sigma\partial_\Psi J_1\\
\dot z&=\epsilon^{-1}(\partial_E J_1 p_z+\partial_{P_\parallel}J_1)
\end{align*}
We can then check numerically that the Hamiltonian flow of $J_1$ is periodic with period $2\pi$. \\
To simulate the non-perturbative equations of motion, we need derivatives of $J_1$ in terms of the coordinates $(r,\theta,z,p_\perp,p_z)$. We implement the map $(r,\theta,z,p_\perp,p_z)\to(\Psi,P_\parallel,E)$, compute the derivatives we need using automatic differentiation, and apply the chain rule. 

We numerically demonstrate the $2\pi$-periodicity of the Hamiltonian trajectories with Hamiltonian $J_1$ as follows. We sample $N=100$ random initial conditions $\{\xi_i(0)\}_{i=1,\dots, N}$ from the full-orbit phase space. Using a timestep $\Delta t$ we integrate the initial conditions over the time interval $[0,2\pi]$ using a Runge-Kutta scheme applied to Hamilton's equations for $J_1$. This results in a new sequence of points in phase space $\{\xi_i^{\Delta t}(2\pi)\}_{i=1,\dots,N}$ that approximates the time-$2\pi$ evolution of each initial condition under the $J_1$ flow. Due to the truncation error inherent to Runge-Kutta, the absolute differences $||\xi_i^{\Delta t}(2\pi)-\xi_i(0) ||$ may not vanish when the underlying dynamics is actually $2\pi$-periodic. However, for small enough $\Delta t$, the differences should obey a power law scaling $||\xi_i^{\Delta t}(2\pi)-\xi_i(0) || \sim \Delta t^\alpha$, where $\alpha$ denotes the order of the integrator, if the true dynamics is $2\pi$-periodic. In particular the mean log error
\begin{align*}
    MLE(\Delta t) = \frac{1}{N}\sum_{i=1}^N \text{log}_{10}||\xi_i^{\Delta t}(2\pi)-\xi_i(0)||
\end{align*}
should be an affine function of $\text{log}_{10}\Delta t$ with slope  $\alpha$. Figure \ref{fig:integralcurveperiodicity} displays the computed values of $MLE(\Delta t)$ for various values of $\text{log}_{10}\Delta t$. The values lie along a line with slope given by the order of the integrator, as expected.
% {\color{red}Ian has an idea to improve the demo here. Implment and insert descriptive text here. --JB}

\begin{figure}[H]
    \centering
    \includegraphics[width=1\linewidth]{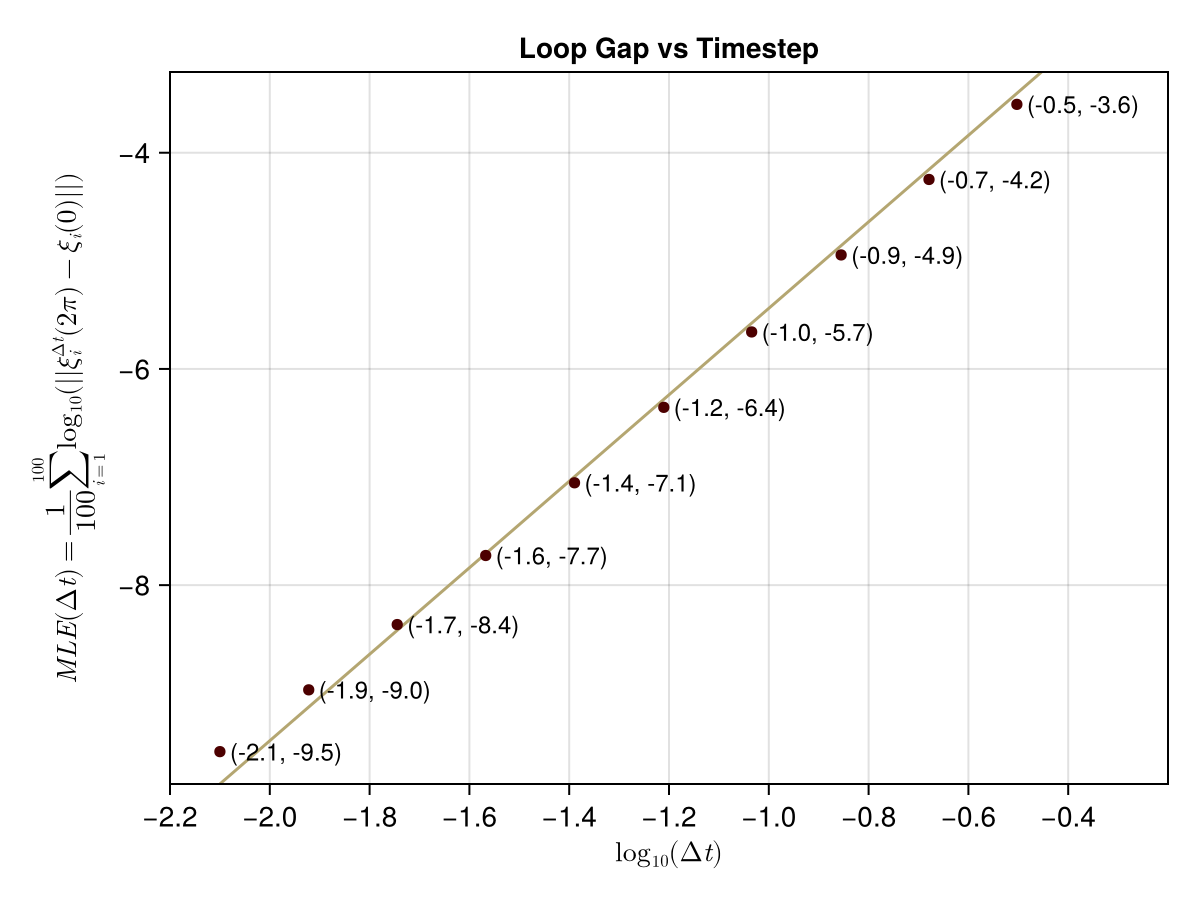}
    \caption{For the Hamiltonian flow $J_1$ with $\epsilon = 0.1$, we compute the average magnitude of the difference $||\xi_i^{\Delta t}(2\pi) - \xi_i(0)||$, where $\xi_i = (r_i, \theta, z, p_r, p_\theta, p_z) 
\in [0.25,1] \times [0, 2\pi] \times [0, 2\pi] \times [-1, 1] \times [-1, 1] \times [-1, 1]$ for $i = 1, ...,100$. This difference, computed using the fourth-order Runge–Kutta (RK4) scheme with varying time steps, exhibits the expected slope of 4.}
    \label{fig:integralcurveperiodicity}
\end{figure}

% {\color{red}Rishita to add content on frequency comparison between full-orbit and nonperturbative model.}
\noindent (III) \textbf{Nonperturbative predictions.} Here we verify numerically that the nonperturbative equations of motion match the full-orbit equations of motion in the sense that integrating both for a single return time (the time in which a particle returns to the Poincar\'e section) gives the same result. We define a Poincaré section by setting $\zeta = 0$. From Eqs. \eqref{sp_polar_1}-\eqref{sp_polar_2}, this gives 
\begin{align}
p_r &= p_{\perp}, \\
p_{\theta} &= r^2\,\iota(\psi)\,p_z.\label{section_def}
\end{align}
Since $p_{\perp}$ is always positive, we can use \eqref{section_def} and the condition $p_r > 0$ to identify points on this section. We initialize the non-perturbative system \eqref{nonpertgc1}-\eqref{nonpertgc5} at ($r_0$, $\theta_0$, $z_0$, $p_{\perp 0}$, $p_{z0}$) and the full orbit system at ($r_0$, $\theta_0$, $z_0$, $p_{r0}$, $p_{\theta0}$, $p_{z0}$) = ($r_0$, $\theta_0$, $z_0$, $p_{\perp 0}$, $r^2\,\iota(\psi)\,p_{z0}$, $p_{z0}$). In this setup, we used $\iota(\psi)=\sqrt{2}\,,\,\psi(r)=r^2$. Both equations are integrated using the fourth-order Runge–Kutta (RK4) method. For the full-orbit case, we detect crossings with the Poincaré section using a bisection method. We then compute $\Delta z/\Delta t$ for each model, where $\Delta t$ is the first-return time predicted by the full-orbit model, and $\Delta z$ is the predicted change in $z$. By \eqref{nonpertgc2}-\eqref{nonpertgc3} this should compute the angular frequency of rotations along the $z$-axis. Fig. \ref{fig:frequencycomparision} displays this comparison for multiple combinations of $\epsilon$ and initial particle radius $r_0$. Predictions from the two models agree within machine precision, confirming the accuracy of the nonperturbative guiding center model in these fields.

\begin{figure}[H]
    \centering
    \includegraphics[width=1\linewidth]{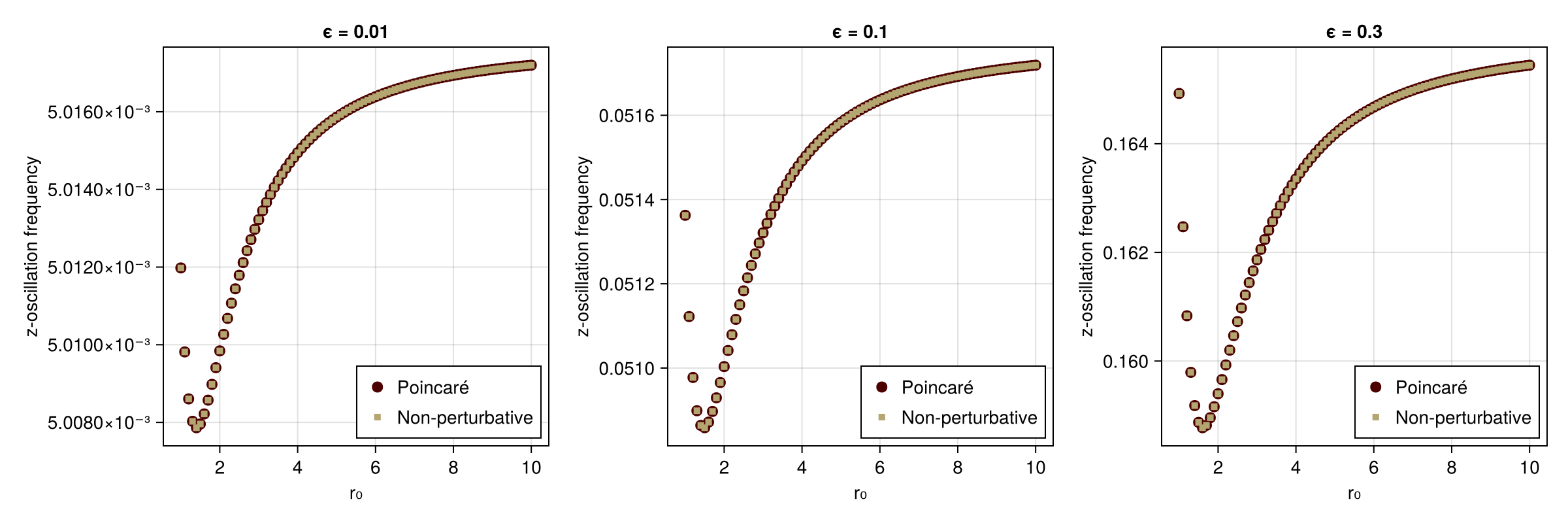}
    \caption{The circular points represent the frequency at which the full orbit $z$ intersects the Poincaré section. The rectangular points correspond to the frequency of the $z$ orbit in the non-perturbative model, initialized at $(\theta_0, z_0, p_{\perp 0}, p_{z0}) = (1, 1, 1.5, 0.5)$.}
    \label{fig:frequencycomparision}
\end{figure}

% \section{Qin-Davidson time-dependent field}
% {\color{red}TBD if this appears in this paper...}

% The Poincar\'e-Cartan $1$-form on extended phase space $\tilde{M}\ni (r,\theta,p_r,p_\theta,\tau)$ is
% \begin{align*}
%     \widetilde{\vartheta} = \sigma\frac{1}{2}r^2B(\tau)\,d\theta + \epsilon(p_r\,dr + p_\theta\,d\theta) - \epsilon(\tfrac{1}{2}p_r^2 + \tfrac{1}{2}r^{-2}p_\theta^2) d\tau.
% \end{align*}
% The presymplectic form on $\widetilde{M}$ is therefore
% \begin{align*}
%     &d\widetilde{\vartheta} = \sigma B(\tau)\,r\,dr\wedge d\theta + \epsilon(dp_r\wedge dr + dp_\theta\wedge d\theta)\\
%     &- \epsilon\bigg(p_r\,dp_r + r^{-2}p_\theta\, dp_\theta - r^{-3}p_\theta^2\,dr + \frac{1}{2\epsilon}\sigma\,r^2\,B^\prime(\tau)d\theta\bigg)\wedge d\tau.
% \end{align*}
% The Hamiltonian vector field $\widetilde{V}=(\dot{r},\dot{\theta},\dot{p}_r,\dot{p}_\theta,\dot{\tau})$ on $\widetilde{M}$ is determined up to scaling factor by $\iota_{\widetilde{V}}d\widetilde{\vartheta} = 0$. The unique $\widetilde{V}$ with $\dot{\tau} = \epsilon$ is
% \begin{align*}
% \dot{p}_r & = \sigma\,B(\tau)\,r^{-1}p_\theta + \epsilon\,r^{-3}\,p_\theta^2\\
% \dot{p}_\theta & = -\sigma\,B(\tau)\,r\,p_r - \frac{1}{2}\sigma\,r^2\,B^\prime(\tau)\\
% \dot{r} & = \epsilon\,p_r\\
% \dot{\theta} & = \epsilon\,r^{-2}\,p_\theta\\
% \dot{\tau} & = \epsilon.
% \end{align*}
% This system is nearly-periodic. It reflects the standard ``high-flow" ordering for guiding center theory in time-dependent fields.

\section{Discussion}
Nonperturbative guiding center theory\cite{j_w_burby_nonperturbative_2025} offers a promising modeling alternative to traditional guiding center theory when $\epsilon$, the ratio of a particle's gyroradius to the scale length of the magnetic field, is only marginally small. While the model allows for nonperturbative $\epsilon$, it assumes existence of a hidden symmetry in the single-particle phase space that extends the perturbative hidden symmetry first identified by Kruskal.  This work rigorously justifies this assumption in idealized symmetric magnetic fields, including a slab configuration $\bm{B} = (1 + y)\bm{e}_z$ and the screw pinch. In particular, we have shown for each of these fields that there is an exact constant of motion for general $\epsilon$ that is agrees with Kruskal's adiabatic invariant series \emph{to all orders} when $\epsilon \ll 1$. Aside from the trivial case of a uniform magnetic field, we know of no other rigorous all-orders non-perturbative extensions of Kruskal's series expansion for the first adiabatic invariant. 

Qin and Davidson\cite{qin_exact_2006} previously found a family of exact constants of motion for a charged particle moving in a field of the form $\bm{B} = B(t)\,\bm{e}_z$, parameterized by solutions of an auxiliary ordinary differential equation. They showed that if $B(t) = \mathcal{B}(\epsilon t)$ is slowly varying and a slow solution of the auxiliary equation exists then there is an exact invariant asymptotic to Kruskal's adiabatic invariant at leading order in $\epsilon$. They did not demonstrate higher-order agreement with Kruskal's series, nor did they justify their assumption that a slow solution of the auxiliary equation exists. It would be interesting to determine if their assumption on existence of a slow solution is justified, and if agreement with Kruskal's series continues to higher order in perturbation theory. This problem is qualitatively different than the ones considered in this work because there is no obvious symmetry that implies integrability in the sense of Liouville.

The nonperturbative guiding center model was originally developed to extend the traditional guiding center model\cite{Cary_2009} into regimes where the particle gyroradius encroaches on the equilibrium scale length. It is natural to ask whether it also extends the so-called gyrocenter model for particles moving in fields that include a small-amplitude fluctuation with perpendicular length scales comparable to the gyroradius, as commonly employed in gyrokinetic modeling\cite{Brizard_2007}. Potential benefits of a nonperturbative model in this scenario would include allowing for nonperturbative amplitude of the small-scale fluctuations and a nonperturbative flute parameter $k_\parallel / k_\perp$. Since the gyrocenter model is fundamentally based on the same hidden symmetry principle\cite{Kruskal_1962,burby_general_2020,burby_normal_2021,burby_nearly_2023} as the guiding center model, and application of the nonperturbative model only requires existence of a nonperturbative hidden symmetry\cite{j_w_burby_nonperturbative_2025}, there is compelling reason to suspect an affirmative answer. It would be interesting to carefully investigate this question in future work.

\noindent\emph{Acknowledgements--} This material is based on work supported by the U.S. Department of Energy, Office of Science, Office of Advanced Scientific Computing Research, as a part of the Mathematical Multifaceted Integrated Capability Centers program, under Award Number DE-SC0023164. It was also supported by U.S. Department of Energy grant \# DE-FG02-04ER54742.

% Create the reference section using BibTeX:
%\bibliography{/Users/josh/Dropbox/Apps/Texpad/latex/cumulative_bib_file.bib}
\bibliographystyle{unsrt}
\bibliography{cumulative_bib_file.bib}
%\input{output.bbl}
%%%%%%%%%%%%%%%%%%%%%%%%%%%%%%%%%%%%%

%% put content of bib file here when ready to submit

%%%%%%%%%%%%%%%%%%%%%%%%%%%%%%%%%%%%

\end{document}